\renewcommand{\O}[1]{\ensuremath{{O(#1)}}} 
\newcommand{\etal}{\textit{e{}t~a{}l.}\xspace}
\renewcommand{\Pr}[1]{\ensuremath{\mathbf{Pr}\left[#1\right]}}
\newcommand{\Ex}[1]{\ensuremath{\mathbb{E}\left[#1\right]}}
\newcommand{\Exp}[2]{\ensuremath{\mathbb{E}_{#1}\left[#2\right]}}
\newcommand{\norm}[1]{\ensuremath{\left\| #1 \right\|}}
\newcommand{\REAL}{\ensuremath{\mathbb{R}}}
\newcommand{\removed}[1]{}
\newcommand{\eps}{\varepsilon}
\newcommand{\br}[1]{\left\{#1\right\}}                            
\newcommand{\medgen}{set median }
\title{Probabilistic Smallest Enclosing Ball in High Dimensions via Subgradient Sampling} 
\titlerunning{Probabilistic Smallest Enclosing Ball in High Dimensions}
\author{Amer Krivo\v{s}ija}{Department of Computer Science, TU Dortmund, Germany }{amer.krivosija@tu-dortmund.de}{}{}
\author{Alexander Munteanu}{Department of Computer Science, TU Dortmund, Germany }{alexander.munteanu@tu-dortmund.de}{}{}
\authorrunning{A. Krivo\v{s}ija and A. Munteanu}
\keywords{geometric median, convex optimization, smallest enclosing ball, probabilistic data, support vector data description, kernel methods}
\begin{document}

\maketitle

\begin{abstract}
We study a variant of the median problem for a collection of point sets in high dimensions. This generalizes the geometric median as well as the (probabilistic) smallest enclosing ball (pSEB) problems. Our main objective and motivation is to improve the previously best algorithm for the pSEB problem by reducing its exponential dependence on the dimension to linear. This is achieved via a novel combination of sampling techniques for clustering problems in metric spaces with the framework of stochastic subgradient descent. As a result, the algorithm becomes applicable to shape fitting problems in Hilbert spaces of unbounded dimension via kernel functions. We present an exemplary application by extending the support vector data description (SVDD) shape fitting method to the probabilistic case. This is done by simulating the pSEB algorithm implicitly in the feature space induced by the kernel function.
\end{abstract}

\section{Introduction}
The (probabilistic) smallest enclosing ball (pSEB) problem {in $\REAL^d$} is to find a center that minimizes the (expected) maximum distance to the input points (see Definition~\ref{def:probseb}). It occurs often as a building block for complex data analysis and machine learning tasks like estimating the support of high dimensional distributions, outlier detection, novelty detection, classification and robot gathering \cite{CieliebakFPS12,StolpeBDM13,TaxD04,TsangKC05}. It is thus very important to develop highly efficient approximation algorithms for the base problem. This involves reducing the number of points but also keeping the dependence on the dimension as low as possible. Both objectives will be studied in this paper. We will focus on a small dependence on the dimension. This is motivated as follows.

Kernel methods are a common technique in machine learning. These methods implicitly project the $d$-dimensional input data into much larger dimension $D$ where simple linear classifiers or spherical data fitting methods can be applied to obtain a non-linear separation or non-convex shapes in the original $d$-dimensional space. The efficiency of kernel methods is usually not harmed. Despite the large dimension $D\gg d$, most important kernels, and thus inner products and distances in the $D$-dimensional space, can be evaluated in $O(d)$ time \cite{RasmussenW06}.

In some cases, however, a proper approximation relying on sampling and discretizing the ambient solution space may require a polynomial or even exponential dependence on $D$. The algorithm of Munteanu \etal \cite{MunteanuSF14} is the only fully polynomial time approximation scheme (FPTAS) and the fastest algorithm to date for the pSEB problem in fixed dimension. However, it suffers from the stated problems. In particular, the number of realizations sampled by their algorithm had a linear dependence on $D$ stemming from a ball-cover decomposition of the solution space. The actual algorithm made a brute force evaluation (on the sample) of all centers in a grid of exponential size in $D$. This is prohibitive in the setting of kernel methods since the implicit feature space may have infinite dimension. Even if it is possible to exploit the up to $n$-dimensional subspace spanned by $n$ points in infinite dimensions, we would still have $D = n \gg d$ leading to exponential time algorithms.

To make the probabilistic smallest enclosing ball algorithm viable in the context of kernel methods and generally in high dimensions, it is highly desirable to reduce the dependence on the dimension to a small polynomial occurring only in evaluations of inner products and distances between two (low dimensional) vectors.

\subsection{Related work} 
\begin{description}
\item[Probabilistic smallest enclosing ball]
The study of \emph{probabilistic clustering problems} was initiated by Cormode and McGregor \cite{CormodeM08}. They developed approximation algorithms for the probabilistic settings of $k$-means, $k$-median as well as $k$-center clustering. For the metric $1$-center clustering problem their results are bi-criteria $O(1)$-approximation algorithms with a blow-up on the number of centers. Guha and Munagala \cite{GuhaM09} improved the previous work by giving $O(1)$-approximations that preserve the number of centers. Munteanu \etal \cite{MunteanuSF14} gave the first fully polynomial time $(1+\eps)$-approximation scheme (FPTAS) for the probabilistic Euclidean $1$-center, i.e., the probabilistic smallest enclosing ball problem, in fixed dimensions. The algorithm runs in linear time for sampling a constant number of realizations. Solving the subsampled problem takes only constant time, though exponential in the dimension. This yields a total running time of roughly $O(nd/\eps^{O(1)} + 1/\eps^{O(d)})$. Based on $\eps$-kernels for probabilistic data \cite{HuangLPW16}, Huang and Li \cite{HuangL17} generalized the $(1+\eps)$-approximation of \cite{MunteanuSF14} to a polynomial time approximation scheme (PTAS) for Euclidean $k$-center in $\REAL^d$ for fixed constants $k$ and $d$. We note that the running time of the algorithm in \cite{HuangL17} grows as a double exponential function of the dimension and it is unclear how to reduce this. We thus base our work on the FPTAS of \cite{MunteanuSF14} and reduce its exponential dependence to linear.

\item[Sampling techniques for $1$-median]
The work of \cite{MunteanuSF14} showed a reduction from the probabilistic smallest enclosing ball to $1$-median problems on \emph{(near)-metric} spaces. We thus review relevant results on sampling techniques for metric $1$-median.
B\u{a}doiu \etal \cite{Badoiu02} showed that with constant probability, the span of a uniform sample of a constant number of input points contains a $(1+\eps)$-approximation for the $1$-median. This was used to construct a set of candidate solutions of size $O(2^{1/\eps^{O(1)}}\log n)$. A more refined estimation procedure based again on a small uniform sample was used by Kumar \etal \cite{KumarSS10} to reduce this to $O(2^{1/\eps^{O(1)}})$. Indyk and Thorup~\cite{IndykDiss, Thorup04} showed that a uniform sample of size $O(\log n/\eps^2)$ is sufficient to approximate the discrete metric $1$-median on $n$ points within a factor of $(1+\eps)$. Ackermann \etal \cite{AckermannBS10} showed how this argument can be adapted to doubling spaces which include the continuous Euclidean space. We adapt these ideas to find a $(1+\eps)$-approximation to the best center in our setting.

\item[Stochastic subgradient descent]
One quite popular and often only implicitly used technique in the coreset literature is derived from convex optimization, see \cite{MunteanuS18}.
One of the first results of that kind is given in the uniform sampling algorithm of B\u{a}doiu \etal \cite{Badoiu02} for $1$-median stated above. In each iteration a single point is sampled uniformly. With high probability moving the current center towards that point for a carefully chosen step size improves the solution. Each step can be seen as taking a descent towards a uniformly random direction from the subgradient which equals roughly the sum of directions to all points. Another example is the coreset construction for the smallest enclosing ball problem by B\u{a}doiu and Clarkson \cite{BadoiuC03}, where the next point included in the coreset is the one maximizing the distance to the current best center. The direction taken towards that point is again related to the subgradient of the objective function at the current center position. The authors also gave a more explicit application of subgradient descent to the problem with a slightly larger number of iterations. More recently, Cohen \etal \cite{CohenLMPS16} developed one of the fastest $(1+\eps)$-approximation algorithms to date for the geometric median problem via stochastic subgradient methods.

\item[Kernel methods in machine learning]
Kernel functions simulate an inner product space in large or even unbounded dimensions but can be evaluated via simple low dimensional vector operations in the original dimension of input points \cite{ScholkopfS02}. This enables simple 
spherical shape fitting via a smallest enclosing ball algorithm in the high dimensional feature space, which implicitly defines a more complex and even non-convex shape in the original space. The smallest enclosing ball problem in kernel spaces is well-known as the support vector data description (SVDD) by Tax and Duin \cite{TaxD04}. A more subtle connection between (the dual formulations of) several kernel based methods in machine learning and the smallest enclosing ball problem was established by Tsang \etal \cite{TsangKC05}.
\end{description}

\subsection{Our contributions and outline} 
\begin{itemize}
\item We extend in Section \ref{sec:setmed} the geometric median in Euclidean space to the more general set median problem. It consists of finding a center $c\in \REAL^d$ that minimizes the sum of maximum distances to sets of points in a given collection of $N$ point sets. We show how to solve this problem via estimation and sampling techniques combined with a stochastic subgradient descent algorithm.
\item The elements in the collection are sets of up to $n$ points in $\REAL^d$. We discuss in Section \ref{sec:coresets} the possibility of further reducing their size. In the previous work \cite{MunteanuSF14} they were summarized via strong coresets of size $1/\eps^{\Theta(d)}$ for constant dimension $d$. This is not an option in high dimensions where, e.g. $d \approx n$. Reviewing the techniques of Agarwal and Sharathkumar \cite{AgarwalS15} we can show that no reduction below $\min\{n, \exp(d^{1/3})\}$ is possible unless one is willing to sacrifice an additional approximation factor of roughly $\sqrt{2}$. However, we discuss the possibility to achieve roughly a factor $(\sqrt{2}+\eps)$-approximation in streaming via the \emph{blurred-ball-cover} \cite{AgarwalS15} of size $O(1/\eps^3\cdot \log 1/\eps)$, and in an off-line setting via weak coresets \cite{BadoiuC03,BadCla08} of size $O(1/\eps)$.
\item We show in Section \ref{sec:pSEB} how this improves the previously best FPTAS for the probabilistic smallest enclosing ball problem from $O(dn/\eps^3 \cdot \log 1/\eps + 1/\eps^{O(d)})$ to $O(dn/\eps^4 \cdot \log^2 1/\eps)$. In particular the dependence on the dimension $d$ is reduced from exponential to linear and more notably occurs only in distance evaluations between points in $d$-dimensional Euclidean space, but not in the number of sampled points nor in the number of candidate centers to evaluate.
\item This enables in Section \ref{sec:pSVDD} working in very high $D$-dimensional Hilbert spaces whose inner products and distances are given implicitly via positive semidefinite kernel functions. These functions can be evaluated in $O(d)$ time although $D$ is large or even unbounded depending on the kernel function. As an example we extend the well-known support vector data description (SVDD) method to the probabilistic case. SVDD is equivalent to the smallest enclosing ball problem in the implicit high-dimensional feature space.
\end{itemize}

\noindent
Please find the missing proofs in Appendix~\ref{append}.

\subsection{General notation}
We denote the set of positive integers up to $n\in\mathbb N$ by $[n]=\{1,\ldots,n\}$. For any vectors $x,y\in\REAL^d$ we denote their inner product by $\langle x,y \rangle = x^T y = \sum\nolimits_{i=1}^{d} x_i y_i$ and the Euclidean norm by $\norm{x}=\sqrt{\langle x,x \rangle} = (\sum\nolimits_{i=1}^{d} x_i^2)^{1/2}$. The Cauchy-Schwarz inequality (CSI) states that $\langle x,y\rangle\leq\norm{x}\norm{y}$.
For any convex function $f\colon \REAL^d \rightarrow \REAL$ we denote by $\partial f(x)=\br{g\in\REAL^d \mid \forall y\in\REAL^d\colon f(x)-f(y) \leq \langle g, x-y \rangle}$ the subdifferential, i.e., the set of subgradients of $f$ at $x$. For any event $\mathcal E$ let the indicator function be $\mathds{1}_{\mathcal E}=1$ if $\mathcal E$ happens and $0$ otherwise. We denote by $B(c,r)=\{x \mid \norm{c-x}\leq r\}$ the Euclidean ball centered at $c\in\REAL^d$ with radius $r\geq 0$. We assume the error parameter satisfies $0<\eps<1/9$. All our results hold with constant probability, say $1/8$, which can be amplified to arbitrary $1-\eta$, $0<\eta<1$, by running $\O{\log 1/\eta}$ independent repetitions and returning the minimum found.

\section{A generalized median problem}
\label{sec:setmed}
The probabilistic smallest enclosing ball (pSEB) problem is to find a center that minimizes the expected maximum distance to points drawn from the input distributions (see Section~\ref{sec:pSEB} and Definition~\ref{def:probseb}), and it can be reduced to two different types of $1$-median problems \cite{MunteanuSF14}. One of them is defined on the set of all non-empty locations in $\REAL^d$ where probabilistic points may appear, equipped with the Euclidean distance. The other is defined on the collection of all possible realizations of probabilistic point sets, and the distance measure between a center $c\in\REAL^d$ and a realization $P_i\subset\REAL^d$ is the maximum distance between the center and any of the realized points, i.e., $\max\nolimits_{p\in P_i} \norm{c-p}$.

We begin our studies with a generalized median problem that we call the \emph{\medgen problem} and covers both of these cases.

\begin{definition}[\medgen problem]
	\label{def:generalmedian}
	Let $\mathcal P=\{P_1,\ldots,P_N\}$ be a family of finite non-empty sets where $\forall i\in[N]\colon P_i \subset \REAL^d$ and $n = \max\{ |P_i|\mid i\in[N] \}$. The \medgen problem on $\mathcal P$ consists in finding a center $c\in\REAL^d$ that minimizes the cost function 
	\[
	f(c)=\sum\limits_{i=1}^N m(c,P_i),
	\]
	 where $m(c,P_i)=\max\nolimits_{p\in P_i} \norm{c-p}$.
\end{definition}

It was noted in \cite{MunteanuSF14} that the distance measure $m$ between any two sets $A,B\subset\REAL^d$ defined by the maximum distance $m(A,B)=\max\nolimits_{a\in A, b\in B} \norm{a-b}$ is not a metric since for any non-singleton set $C\subset \REAL^d$ it holds that $m(C,C)>0$. We stress here that we consider only cases where, as in Definition \ref{def:generalmedian}, one of the sets $A=\lbrace c\rbrace$ is a singleton, and $B=P$ is an arbitrary non-empty set of points from $\REAL^d$. In order to directly apply results from the theory of metric spaces we thus simply define $m(A,B)=0$ whenever $A=B$. 

\begin{lemma}
	\label{lem:metric}
	Let $\mathcal{X}$ be the set of all finite non-empty subsets of $\REAL^d$. We define 
	\[
	m(A,B)=\begin{cases}\max_{a\in A, b\in B} \norm{a-b} & \text{if } A\neq B\\ 
	0 & \text{if } A=B \end{cases}
	\]
	for any $A,B\in \mathcal{X}$. Then $(\mathcal{X},m)$ is a metric space.
\end{lemma}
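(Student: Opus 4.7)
The plan is to verify the four metric axioms in turn: non-negativity, symmetry, identity of indiscernibles, and the triangle inequality. Non-negativity and symmetry are immediate from the definition of $m$, since the Euclidean norm is non-negative and symmetric in its arguments and the condition $A = B$ is symmetric in $A$ and $B$.

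For the identity of indiscernibles, the direction $A = B \Rightarrow m(A,B) = 0$ holds by construction. For the converse, I would assume $A \neq B$ and, without loss of generality, pick $a \in A \setminus B$ (swapping $A$ and $B$ if necessary, since $A \neq B$ forces at least one of the symmetric differences to be non-empty). Then for every $b \in B$ we have $a \neq b$ and hence $\|a - b\| > 0$. Since $B$ is non-empty, fixing any such $b$ yields $m(A,B) \geq \|a - b\| > 0$. This step is precisely where the explicit convention $m(A,A) := 0$ becomes essential: without it, the naive definition $\max_{a,b}\|a - b\|$ would already fail identity of indiscernibles on every non-singleton set.

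The main step is the triangle inequality $m(A,C) \leq m(A,B) + m(B,C)$, which I would handle by a short case analysis on the equalities among $A$, $B$, $C$. If $A = C$, then $m(A,C) = 0$ and the inequality follows from non-negativity. If $A = B$ (respectively $B = C$), one summand on the right vanishes and the other equals $m(A,C)$ directly, so the bound holds with equality. The only substantive case is when $A$, $B$, $C$ are pairwise distinct. There I would select a pair $(a^\ast, c^\ast) \in A \times C$ achieving $m(A,C) = \|a^\ast - c^\ast\|$ and pick any $b \in B$ (non-empty by assumption). The classical Euclidean triangle inequality gives $\|a^\ast - c^\ast\| \leq \|a^\ast - b\| + \|b - c^\ast\|$, and the two summands are upper bounded by $m(A,B)$ and $m(B,C)$ respectively, since $A \neq B$ and $B \neq C$ ensure these maxima are actually taken over the corresponding product sets.

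The hard part is purely bookkeeping rather than analysis: one has to check that the ad hoc convention $m(A,A) := 0$ never accidentally breaks the triangle inequality in a configuration where two of the three sets coincide but the third is different. The case split above is exactly what confirms this consistency, and no genuine analytic difficulty arises beyond the standard Euclidean triangle inequality.
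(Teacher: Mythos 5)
Your proof is correct and follows essentially the same route as the paper: verify non-negativity and symmetry directly, establish identity of indiscernibles by exhibiting $a\in A$, $b\in B$ with $a\neq b$, and prove the triangle inequality via the pairwise-distinct case plus a case split on coincidences among $A,B,C$. The only cosmetic difference is that you are slightly more explicit in choosing $a\in A\setminus B$ (up to swapping) for the indiscernibles step, which tightens a spot the paper states a bit informally.
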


Note that in case of singleton sets, the \medgen problem in Definition \ref{def:generalmedian} is equivalent to the well-known Fermat-Weber problem also known as $1$-median or geometric median. Also, for $N=1$ it coincides with the smallest enclosing ball or $1$-center problem.

For both of these problems there are known algorithms based on the subgradient method from convex optimization. B\u{a}doiu and Clarkson \cite{BadoiuC03} gave a simple algorithm for approximating the 1-center within $(1+\eps)$-error. Starting from an initial center, it is iteratively moved a little towards the input point that is furthest away. Note that a suitable subgradient at the current center points exactly into the opposite direction. More precisely if $q\in P$ is a point that is furthest away from the current center $c$ then $(c-q)/\norm{c-q}\in \partial \max_{p\in P} \norm{c-p}$. The algorithm can thus be interpreted as a subgradient descent minimizing $\max_{p\in P} \norm{c-p}$. Weiszfeld's algorithm \cite{weiszfeld1937,WeiszfeldP09} was the first that solved the $1$-median problem within additive $O(\eps)$-error in $O(1/\eps)$ subgradient iterations. And indeed one of the fastest algorithms to approximate the geometric median problem to date relies on a stochastic subgradient descent \cite{CohenLMPS16}. The crucial ingredients to turn the additive error to a relative error are finding a suitable starting point that achieves a constant approximation and estimating its initial distance to the optimal solution. Another important step is a bound on the Lipschitz constant of the cost function, see also \cite{beck2014}. We will generalize these approaches to minimizing the cost function $f$ of the \medgen problem from Definition \ref{def:generalmedian}.

First note that $f$ is a convex function which implies that we can apply the theory of convex analysis and optimization. In particular, it implies that the subdifferential $\partial f(c)$ is non-empty for any center $c\in\REAL^d$ and $c$ is locally optimal if and only if $0\in\partial f(c)$. Moreover, any local optimum is also globally optimal by convexity \cite{BoydV2004,nesterov2004}. This implies that if we find a $(1+\eps)$-approximation to a local minimum of the convex function, the convexity implies that it is a $(1+\eps)$-approximation to the global minimum as well.

To see that $f$ is convex, note that the Euclidean norm is a convex function. Therefore the Euclidean distance to some fixed point is a convex function since every translation of a convex function is convex. The maximum of convex functions is a convex function and finally the sum of convex functions is again convex. We prove this claim for completeness.

\begin{lemma}
	\label{lem:fconvex}
The objective function $f$ of the \medgen problem (see Definition \ref{def:generalmedian}) is convex.
\end{lemma}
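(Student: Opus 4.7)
The plan is to build convexity of $f$ from its basic components, following the outline the authors sketched just before the lemma. First I would verify that the Euclidean norm $\norm{\cdot}$ is convex on $\REAL^d$: for any $x,y\in\REAL^d$ and $\lambda\in[0,1]$, the triangle inequality together with positive homogeneity yields $\norm{\lambda x + (1-\lambda) y} \leq \lambda\norm{x} + (1-\lambda)\norm{y}$.

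Next I would observe that for each fixed $p\in\REAL^d$ the map $c\mapsto \norm{c-p}$ is convex, since it is the composition of the convex norm with the affine translation $c\mapsto c-p$, and composing a convex function with an affine map preserves convexity.

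The third step is to show that $c\mapsto m(c,P_i) = \max_{p\in P_i}\norm{c-p}$ is convex, using the standard fact that the pointwise maximum of a finite family of convex functions is convex. Concretely, for $\lambda\in[0,1]$ and any $c,c'\in\REAL^d$,
\[
m(\lambda c + (1-\lambda)c',P_i) \leq \max_{p\in P_i}\bigl(\lambda \norm{c-p} + (1-\lambda)\norm{c'-p}\bigr) \leq \lambda m(c,P_i) + (1-\lambda) m(c',P_i),
\]
where the first inequality uses convexity of each $c\mapsto\norm{c-p}$ and the second distributes the maximum over the two non-negatively weighted summands.

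Finally, since any non-negative linear combination of convex functions is convex, summing over $i\in[N]$ yields that $f(c)=\sum_{i=1}^N m(c,P_i)$ is convex. I do not foresee any real obstacle beyond chaining these standard convexity-preserving operations in the correct order; the only mild point of care is that Definition~\ref{def:generalmedian} uses $m$ in its singleton-versus-set form $m(c,P_i)=\max_{p\in P_i}\norm{c-p}$, so the symmetric extension and the $A=B$ edge case introduced in Lemma~\ref{lem:metric} never arise here and need no separate treatment.
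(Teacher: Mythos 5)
Your proposal is correct and follows essentially the same route as the paper: the paper's proof compresses the same chain (triangle inequality on the norm at the maximizer of the combined point, then bounding each term by the respective maximum) into a single displayed computation, which is exactly the decomposition you spell out as norm-convexity, affine composition, pointwise maximum, and non-negative sum. No gaps.
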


Next we bound the Lipschitz constant of the function $f$ by $N$. We may get a better bound if we limit the domain of $f$ to a ball of small radius centered at the optimal solution, but the Lipschitz constant cannot be bounded by $o(N)$ in general. We will see later how we can remove the dependence on $N$.

\begin{lemma}
\label{lem:nlipschitz}
The objective function $f$ of the \medgen problem (see Definition \ref{def:generalmedian}) is $N$-Lipschitz continuous, i.e., $|f(x)-f(y)|\leq N\cdot\norm{x-y}$ for all $x,y\in\REAL^d$.
\end{lemma}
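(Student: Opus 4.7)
The plan is to prove the bound term-by-term: show that each single-set distance $m(\cdot, P_i)$ is already $1$-Lipschitz, and then sum up over the $N$ sets via the triangle inequality for real numbers.

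First I would fix an index $i\in[N]$ and any two points $x,y\in\REAL^d$, and let $p^*\in P_i$ be a point attaining the maximum for $x$, so that $m(x,P_i)=\norm{x-p^*}$. By the triangle inequality in Euclidean space,
\[
m(x,P_i)=\norm{x-p^*}\leq \norm{y-p^*}+\norm{x-y}\leq m(y,P_i)+\norm{x-y},
\]
where the last inequality is by definition of $m(y,P_i)$ as the maximum over $P_i$. Swapping the roles of $x$ and $y$ and taking the larger of the two bounds yields $|m(x,P_i)-m(y,P_i)|\leq\norm{x-y}$, so each term of $f$ is $1$-Lipschitz.

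Next I would apply the triangle inequality for absolute values on the finite sum:
\[
|f(x)-f(y)|=\Bigl|\sum_{i=1}^{N}\bigl(m(x,P_i)-m(y,P_i)\bigr)\Bigr|\leq \sum_{i=1}^{N}|m(x,P_i)-m(y,P_i)|\leq N\cdot\norm{x-y},
\]
which is exactly the claimed $N$-Lipschitz continuity.

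There is no real obstacle here: the argument is elementary, needing only the triangle inequality and the fact that the pointwise maximum of $1$-Lipschitz functions is $1$-Lipschitz. The constant $N$ is unavoidable in general, as remarked before the statement, because in the worst case all $N$ set-distance functions can change simultaneously in the same direction; any sharper dimension-free bound would fail for configurations where every $P_i$ forces the maximizing point to shift in the direction of $x-y$. The later improvements discussed in the paper will not come from a better Lipschitz analysis but from stochastic subgradient sampling that effectively averages out this factor.
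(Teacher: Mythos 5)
Your proof is correct and follows essentially the same route as the paper's: both show each term $m(\cdot,P_i)$ is $1$-Lipschitz via the (reverse) triangle inequality and then sum over the $N$ terms. The paper phrases the per-term step as $|m(x,P_i)-m(y,P_i)|\leq m(x,y)=\norm{x-y}$ using the metric property of $m$ from Lemma~\ref{lem:metric}, whereas you derive it directly from the Euclidean triangle inequality; these are the same argument.
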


We want to minimize $f$ via the subgradient method, see \cite{nesterov2004}. For that sake we need to compute a subgradient $g(c_i)\in \partial f(c_i)$ at the current center $c_i$. To this end we prove the following lemma.

\begin{lemma}
	\label{lem:subgrad1}
	Let $c_i\in \REAL^d$ be any center. For each set $P_j\in\mathcal{P}$, let $p_j\in P_j$ be a point with $\norm{c_i-p_j} = m(c_i,P_j)$. We have
\[
		g(c_i)=\sum\limits_{j=1}^N \frac{c_i-p_j}{\norm{c_i-p_j}} \cdot \mathds{1}_{c_i\neq p_j} \in \partial f(c_i),
\]
	i.e., $g(c_i)$ is a valid subgradient of $f$ at $c_i$.
\end{lemma}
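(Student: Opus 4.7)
The plan is to decompose the problem along the obvious structure: $f$ is a sum over $j\in[N]$ of the functions $f_j(c)=m(c,P_j)$, and each $f_j$ is itself a pointwise maximum of the convex distance functions $c\mapsto\norm{c-p}$ for $p\in P_j$. So by the additivity of the subdifferential (the subdifferential of a sum equals the Minkowski sum of subdifferentials under standard conditions which are trivially met for these everywhere-finite convex functions), it is enough to verify, for each $j$, that the term
\[
g_j := \frac{c_i-p_j}{\norm{c_i-p_j}}\cdot\mathds{1}_{c_i\neq p_j}
\]
lies in $\partial f_j(c_i)$, where $p_j\in P_j$ attains $\norm{c_i-p_j}=m(c_i,P_j)$.

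Rather than invoking Danskin-style theorems I would just verify the subgradient inequality $f_j(c_i)-f_j(y)\le\langle g_j,\,c_i-y\rangle$ for arbitrary $y\in\REAL^d$ by a short direct calculation. The first step is the easy upper bound $f_j(y)=\max_{p\in P_j}\norm{y-p}\ge\norm{y-p_j}$, so
\[
f_j(c_i)-f_j(y)\le\norm{c_i-p_j}-\norm{y-p_j}.
\]
In the case $c_i\neq p_j$, apply the Cauchy--Schwarz inequality to the unit vector $g_j$ to get $\norm{y-p_j}\ge\langle y-p_j,g_j\rangle$, hence
\[
\norm{c_i-p_j}-\norm{y-p_j}\le\langle c_i-p_j,g_j\rangle-\langle y-p_j,g_j\rangle=\langle c_i-y,g_j\rangle,
\]
using $\langle c_i-p_j,g_j\rangle=\norm{c_i-p_j}$ by the definition of $g_j$. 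This establishes $g_j\in\partial f_j(c_i)$ in this case.

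The remaining case is $c_i=p_j$, which (since $p_j$ is a farthest point in $P_j$ from $c_i$) forces $P_j=\{c_i\}$ and hence $f_j(c_i)=0\le f_j(y)$ for all $y$. Then $g_j=0$ and the subgradient inequality $f_j(c_i)-f_j(y)\le 0=\langle 0,c_i-y\rangle$ holds trivially. Summing the $N$ subgradient inequalities $g_j\in\partial f_j(c_i)$ over $j$ yields
\[
f(c_i)-f(y)=\sum_{j=1}^N\bigl(f_j(c_i)-f_j(y)\bigr)\le\Big\langle\sum_{j=1}^N g_j,\,c_i-y\Big\rangle=\langle g(c_i),c_i-y\rangle
\]
for every $y\in\REAL^d$, which is exactly the defining property of $g(c_i)\in\partial f(c_i)$. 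There is no serious obstacle here; the only mild subtlety is handling the degenerate coincidence $c_i=p_j$ cleanly, which is why the indicator $\mathds{1}_{c_i\neq p_j}$ appears in the statement in the first place.
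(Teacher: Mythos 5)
Your proof is correct and follows essentially the same route as the paper's: a term-by-term verification of the subgradient inequality using the maximality of $p_j$ and the Cauchy--Schwarz inequality, with the degenerate case $c_i=p_j$ handled separately, followed by summation over $j$. If anything, your version is slightly more complete, since you verify the inequality for an arbitrary $y\in\REAL^d$ (as the definition of $\partial f(c_i)$ requires), whereas the paper writes the argument only for $y=c^*$ even though the identical computation works for any $y$.
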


For brevity of presentation we omit the indicator function in any use of the above Lemma in the remainder of this paper and simply define $\left( c_i-p_j\right)/\norm{c_i-p_j}=0$ whenever $c_i=p_j$.

The subgradient computation takes $\O{d{n}N}$ time to calculate, since in each of the $N$ terms of the sum we maximize over $|P_i|\leq {n}$ distances in $d$ dimensions to find a point in $P_i$ that is furthest away from $c$. We are going to discuss the possibility of reducing ${n}$ later. For now we focus on removing the dependence on $N$. To this end we would like to replace the exact subgradient $g(c_i)$ by a uniform sample of only one nonzero term which points into the right direction in expectation. We formalize this in the following lemma.

\begin{lemma}
\label{lem:expgradient}
Let $c_i\in \REAL^d$ be any fixed center. For each set $P_j\in\mathcal{P}$, let $p_j\in P_j$ be a point with $\norm{c_i-p_j} = m(c_i,P_j)$. Let $\tilde g(c_i)$ be a random vector that takes the value $\tilde{g}(c_i)=\left( c_i-p_j \right) / \norm{c_i-p_j}$ for $j\in[N]$ with probability $1/N$ each. Then $\Ex{\norm{\tilde{g}(c_i)}^2}\leq 1$ and $\Ex{\tilde{g}(c_i)} = {g}(c_i)/N$, where ${g}(c_i)\in \partial f(c_i)$ is the subgradient given in Lemma \ref{lem:subgrad1}.
\end{lemma}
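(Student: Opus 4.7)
The plan is a direct computation from the definition of the random vector $\tilde{g}(c_i)$; essentially no obstacles arise.

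For the bound on the second moment, I would observe that $\tilde{g}(c_i)$ always takes one of $N$ possible values, each of the form $(c_i-p_j)/\norm{c_i-p_j}$ (or zero, under the convention stated after Lemma~\ref{lem:subgrad1}, when $c_i=p_j$). In the first case this is a unit vector, so $\norm{\tilde g(c_i)}^2=1$, and in the second case $\norm{\tilde g(c_i)}^2=0$. Hence pointwise $\norm{\tilde g(c_i)}^2\le 1$, and taking expectation over the uniform choice of $j\in[N]$ gives $\Ex{\norm{\tilde g(c_i)}^2}\le 1$ immediately.

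For the expectation, I would simply apply the definition of expectation for the discrete uniform distribution on $[N]$:
\[
\Ex{\tilde g(c_i)} \;=\; \sum_{j=1}^{N} \frac{1}{N} \cdot \frac{c_i-p_j}{\norm{c_i-p_j}} \cdot \mathds{1}_{c_i\neq p_j} \;=\; \frac{1}{N}\sum_{j=1}^{N}\frac{c_i-p_j}{\norm{c_i-p_j}}\cdot \mathds{1}_{c_i\neq p_j}.
\]
By Lemma~\ref{lem:subgrad1} the inner sum is precisely $g(c_i)\in\partial f(c_i)$, so $\Ex{\tilde g(c_i)}=g(c_i)/N$, as claimed.

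The only small subtlety to flag is the zero-vector convention for the degenerate term $c_i=p_j$; since this contributes zero to both the norm and the sum, it is harmless in both parts of the argument. Otherwise the proof is a one-line application of linearity of expectation, so I would keep it short and point back to the definitions given in Lemma~\ref{lem:subgrad1} rather than re-derive anything.
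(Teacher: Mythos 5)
Your proof is correct and follows essentially the same route as the paper's: both arguments note that each realization of $\tilde g(c_i)$ is either a unit vector or zero (under the stated convention), so the squared norm is pointwise at most $1$, and both compute $\Ex{\tilde g(c_i)}$ by direct summation over the uniform choice of $j$ and identify the result with $g(c_i)/N$ via Lemma~\ref{lem:subgrad1}. Your explicit handling of the indicator for the degenerate case $c_i=p_j$ is a welcome touch of care but does not change the argument.
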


We can now adapt the deterministic subgradient method from \cite{nesterov2004} using the random unbiased subgradient of Lemma \ref{lem:expgradient} in such a way that the result is in expectation a $(1+\eps)$-approximation to the optimal solution. This method is presented in Algorithm \ref{alg:subgradient}. Given an initial center $c_0$, a fixed step size $s$, and a number of iterations $\ell$, the Algorithm iteratively picks a set $P_j\in\mathcal P$ uniformly at random and chooses a point $p_j\in P_j$ that attains the maximum distance to the current center. This point is used to compute an approximate subgradient via Lemma \ref{lem:expgradient}. The Algorithm finally outputs the best center found in all iterations.

\begin{algorithm}[ht!]
\KwData{A family of non-empty sets $\mathcal P=\br{P_1,\ldots,P_N}$, where $P_i\subset\REAL^d$}
\KwResult{A center $\tilde c \in \REAL^d$}
\caption{Stochastic subgradient method}\label{alg:subgradient}
	Determine an initial center $c_0$\;
	Fix a step size $s$\;
	Fix the number of iterations $\ell$\;
	\For{$i\leftarrow 1$ \KwTo $\ell$}{
		Choose an index $j\in\left[ N\right]$ uniformly at random and compute $\tilde{g}(c_{i-1})$, cf. Lemma \ref{lem:expgradient}\;
		$c_i=c_{i-1} - s \cdot \tilde{g}(c_{i-1})$\;
	}  
	\Return $\tilde{c} \in \operatorname{argmin}\limits_{c\in\{c_i\mid i=0,\ldots,\ell\}} f(c)$\;
\end{algorithm}

The following theorem bounds in expectation the quality of the output that our subgradient algorithm returns. It is a probabilistic adaptation of a result in convex optimization \cite{nesterov2004}.

\begin{theorem}
\label{thm:gradboundprob}
Consider Algorithm~\ref{alg:subgradient} on input $\mathcal P=\br{P_1,\ldots,P_N}$ for the \medgen problem with objective function $f$, see Definition \ref{def:generalmedian}. Let $c^*\in \operatorname{argmin}\limits_{c\in\REAL^d} f(c)$. Let $R=\norm{c_0 - c^*}$. Then
\begin{equation*}
\Exp{\tilde{c}}{ f(\tilde{c}) -f\left(c^*\right) } \leq N \cdot \frac{R^2 + (\ell+1) s^2}{2(\ell+1)s},
\end{equation*}
where the expectation is taken over 
the random variable $\tilde{c} \in \operatorname{argmin}\limits_{c\in\{c_i\mid i=0,\ldots,\ell\}} f\left(c\right)$, i.e., the output of Algorithm \ref{alg:subgradient}.
\end{theorem}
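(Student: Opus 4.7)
The plan is to follow the standard analysis of the stochastic subgradient method, adapted to the biased-scale subgradient of Lemma~\ref{lem:expgradient}. I first expand the squared distance to the optimum after one step of the update rule $c_i = c_{i-1} - s\tilde{g}(c_{i-1})$:
\[
\norm{c_i - c^*}^2 = \norm{c_{i-1} - c^*}^2 - 2s \langle \tilde{g}(c_{i-1}),\, c_{i-1} - c^*\rangle + s^2 \norm{\tilde{g}(c_{i-1})}^2.
\]
Taking conditional expectation given $c_{i-1}$ and applying Lemma~\ref{lem:expgradient}, the middle term becomes $-\frac{2s}{N}\langle g(c_{i-1}), c_{i-1} - c^*\rangle$ with $g(c_{i-1})\in\partial f(c_{i-1})$, and the last term is bounded by $s^2$. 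The subgradient inequality $\langle g(c_{i-1}), c_{i-1}-c^*\rangle \geq f(c_{i-1})-f(c^*)$ then yields
\[
\Ex{\norm{c_i - c^*}^2 \mid c_{i-1}} \leq \norm{c_{i-1}-c^*}^2 - \frac{2s}{N}\bigl(f(c_{i-1})-f(c^*)\bigr) + s^2.
\]

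Next I would take full expectation by the tower property and telescope. To match the form with $(\ell+1)$ in the denominator, I consider one hypothetical further step, defining $c_{\ell+1}=c_\ell-s\tilde g(c_\ell)$ only for the analysis (the algorithm itself never evaluates it). Summing the recursion over $i=0,\ldots,\ell$ and using $\norm{c_0-c^*}=R$ together with $\Ex{\norm{c_{\ell+1}-c^*}^2}\geq 0$ gives
\[
\frac{2s}{N}\sum_{i=0}^{\ell} \Ex{f(c_i)-f(c^*)} \leq R^2 + (\ell+1)s^2.
\]

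Finally, because $\tilde c=\operatorname{argmin}_{i=0,\ldots,\ell} f(c_i)$, we have $f(\tilde c) \leq \frac{1}{\ell+1}\sum_{i=0}^{\ell} f(c_i)$ surely, so averaging the preceding bound over the $\ell+1$ iterates produces exactly
\[
\Exp{\tilde c}{f(\tilde c) - f(c^*)} \leq \frac{1}{\ell+1}\sum_{i=0}^{\ell} \Ex{f(c_i)-f(c^*)} \leq N\cdot \frac{R^2 + (\ell+1)s^2}{2(\ell+1)s},
\]
as required.

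The only delicate point I foresee is the probabilistic bookkeeping: the subgradient inequality is a pointwise statement, so one must condition on $c_{i-1}$ before invoking $\Ex{\tilde g(c_{i-1})\mid c_{i-1}}=g(c_{i-1})/N$, and then take outer expectations over the history of random draws. The factor $1/N$ arising from $\Ex{\tilde g}$ is exactly what produces the leading $N$ in the bound, which reflects the Lipschitz constant of $f$ from Lemma~\ref{lem:nlipschitz}; the convexity of $f$ (Lemma~\ref{lem:fconvex}) is what justifies the subgradient inequality used to link $\langle g(c_{i-1}), c_{i-1}-c^*\rangle$ to the suboptimality $f(c_{i-1})-f(c^*)$. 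Everything else is routine telescoping and averaging.
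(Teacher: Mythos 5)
Your proof is correct and follows essentially the same route as the paper's: expand $\norm{c_i-c^*}^2$ after one update, take conditional expectation using Lemma~\ref{lem:expgradient} to get the $1/N$ factor and the $s^2$ bound, apply the subgradient inequality, telescope over $i=0,\ldots,\ell$ (including the hypothetical $c_{\ell+1}$, which the paper also uses implicitly), and drop the nonnegative final term. The only cosmetic difference is the last step: you bound $f(\tilde c)$ pointwise by the average of the $f(c_i)$ and then take expectations, whereas the paper bounds the sum below by $(\ell+1)\min_i \Exp{c_i}{f(c_i)-f(c^*)}$ and then uses $\Ex{\min_j X_j}\leq\min_j\Ex{X_j}$ — these are interchangeable and yield the identical bound.
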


Our aim now is to choose the parameters $\ell,s,$ and $c_0$ of Algorithm~\ref{alg:subgradient} in such a way that the bound given in Theorem~\ref{thm:gradboundprob} becomes at most $\eps f(c^*)$. 
To this end we can choose the initial center $c_0$ and bound its initial distance $R=\norm{c_0-c^*}$ proportional to the average cost $\O{f(c^*)/N}$ with constant probability using a simple Markov argument.
\begin{lemma}
\label{lem:startingpoint}
Choose a set $P$ from $\mathcal P = \{P_1,\ldots,P_N\}$ uniformly at random and let $c_0$ be an arbitrary point of $P$. Then for any constant $0<\delta_1<1$ it holds that $R=\norm{c_0-c^*}\leq f(c^*)/(\delta_1 N)$ with probability at least $1-\delta_1$.
\end{lemma}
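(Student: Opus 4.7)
The plan is to apply Markov's inequality to the cost of a uniformly sampled set, using the observation that the distance from any point of the sampled set to the optimal center $c^*$ is at most the maximum distance $m(c^*, P)$, which is exactly the per-set contribution to $f(c^*)$.

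More concretely, I would proceed as follows. Let $J\in[N]$ be the index chosen uniformly at random, so that $P = P_J$. Since $c_0$ is an arbitrary point of $P_J$, the definition of $m$ immediately gives
\[
R = \norm{c_0 - c^*} \leq \max_{p\in P_J}\norm{p - c^*} = m(c^*, P_J).
\]
The key computation is then that
\[
\Exp{J}{m(c^*, P_J)} = \frac{1}{N}\sum_{j=1}^N m(c^*, P_j) = \frac{f(c^*)}{N},
\]
which follows directly from the definition of $f$ in Definition~\ref{def:generalmedian} and uniformity of $J$. Combining these two facts shows $\Ex{R} \leq f(c^*)/N$, and since $R\geq 0$, Markov's inequality yields
\[
\Pr{R > \frac{f(c^*)}{\delta_1 N}} \leq \frac{\Ex{R}}{f(c^*)/(\delta_1 N)} \leq \delta_1,
\]
which is the claimed bound with probability at least $1-\delta_1$.

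There is essentially no obstacle here: the statement is a one-line Markov argument once one observes that sampling $P_J$ uniformly makes $m(c^*, P_J)$ an unbiased estimator of the per-set average cost $f(c^*)/N$, and that $\norm{c_0 - c^*}$ is dominated by this quantity for any $c_0\in P_J$. Note also that the bound does not depend on which point of $P_J$ is chosen as $c_0$; any deterministic rule (e.g., the first point listed) works equally well, which is convenient for the algorithm.
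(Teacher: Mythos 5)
Your proof is correct and follows essentially the same route as the paper: both rely on the observation that $R \leq m(c^*,P)$ for any $c_0 \in P$, compute $\mathbb{E}[m(c^*,P)]=f(c^*)/N$ by uniformity, and finish with Markov's inequality. The only cosmetic difference is that the paper applies Markov to $X=m(c^*,P)$ and then dominates $R$ by $X$, whereas you apply Markov to $R$ directly after bounding $\mathbb{E}[R]\leq f(c^*)/N$; the two are interchangeable.
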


Assume that we know the value of $R$, and set the step size to $s=R/\sqrt{\ell+1}$, then Theorem~\ref{thm:gradboundprob} and Lemma \ref{lem:startingpoint} imply that for some constant $C$
\[
\Exp{\tilde c}{ f(\tilde{c}) - f\left(c^*\right) }
\leq \frac{NR}{\sqrt{\ell+1}}\leq \frac{Cf(c^*)}{\sqrt{\ell+1}}
\]
holds with constant probability. We thus only need to run the algorithm for $\ell\in\O{1/\eps^2}$ iterations to get within $\eps f(c^*)$ error. But choosing this particular step size requires to know the optimal center in advance. To get around this, we attempt to estimate the average cost. More formally, we are interested in a constant factor approximation of $f(c^*)/N$. It turns out that we can do this based on a small sample of the input sets unless our initial center is already a good approximation. But in the latter case we do not care about all the subsequent steps or step sizes, since we are already done after the initialization. The proof technique is originally from \cite{KumarSS10} and is adapted here to work in our setting with sets of points.

\begin{lemma}
\label{lem:Restimate}
There exists an algorithm that based on a sample $\mathcal S\subseteq \mathcal{P}$ of size $|\mathcal S|=1/\eps$ returns an estimate $\tilde{R}$ and an initial center $c_0$ in time $O(d{n}/\eps)$ such that with constant probability one of the following holds:
\begin{enumerate}
	\item[a)] $\eps f(c^*)/N \leq \tilde{R} \leq \left( 2/\eps^3\right)\cdot f(c^*)/N$ and $\norm{c_0-c^*}\leq 8 f(c^*)/N$;
	\item[b)] $\norm{c_0-c^*}\leq 4 \eps f(c^*)/N$.
\end{enumerate}
\end{lemma}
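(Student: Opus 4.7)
I would use a two-sample estimator in the spirit of Kumar, Sabharwal, and Sen~\cite{KumarSS10}. Write $\mu=f(c^*)/N$ for the average optimal cost. Pick one set $P'$ uniformly from $\mathcal P$ and let $c_0$ be an arbitrary point of $P'$; independently pick a uniform sample $\mathcal S\subseteq\mathcal P$ of size $1/\eps$; compute the empirical cost $\hat f(c_0)=\sum_{P\in\mathcal S} m(c_0,P)$ in time $O(dn/\eps)$, and return $(c_0,\tilde R)$ with $\tilde R=\eps\,\hat f(c_0)$, a scaling chosen so that $\Ex{\tilde R}=f(c_0)/N$.

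The first half of the analysis is immediate from Lemma~\ref{lem:startingpoint} applied with $\delta_1=1/8$: with probability at least $7/8$ we have $\|c_0-c^*\|\le 8\mu$. If additionally $\|c_0-c^*\|\le 4\eps\mu$, case~(b) already holds and there is nothing left to check, so we may assume $4\eps\mu<\|c_0-c^*\|\le 8\mu$. This already establishes the distance condition in case~(a), and it only remains to verify the two-sided bound $\eps\mu\le\tilde R\le 2\mu/\eps^3$.

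For the upper bound, Lemma~\ref{lem:nlipschitz} gives $f(c_0)\le f(c^*)+8N\mu=9f(c^*)$, hence $\Ex{\hat f(c_0)}\le 9\mu/\eps$; Markov's inequality then yields $\tilde R\le 18\mu$ with probability at least $1/2$, and this is bounded by $2\mu/\eps^3$ using the standing assumption $\eps<1/9$. For the lower bound I use $f(c_0)\ge f(c^*)=N\mu$, so $\Ex{\hat f(c_0)}\ge \mu/\eps$, and upgrade this expectation into the pointwise estimate $\hat f(c_0)\ge\mu$ by case-splitting on how the mass of $\{m(c_0,P_i)\}_{i\in[N]}$ is distributed: either a constant fraction of the sets $P_i$ already satisfies $m(c_0,P_i)\ge\Omega(\mu)$, in which case $\mathcal S$ catches such a set with constant probability, or only a few sets carry most of $f(c_0)$ and each of them individually contributes much more than $\mu$, so a single hit suffices. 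Both branches yield $\hat f(c_0)\ge\mu$ with constant probability.

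The main obstacle is this very last step: a single concentration inequality for $\hat f(c_0)$ is hopeless because the individual summands $m(c_0,P_i)$ can be arbitrarily heavy, so the dichotomy above is essential. It will rely on the assumption $\|c_0-c^*\|>4\eps\mu$ together with the lower bound $f(c_0)\ge N\mu$ and the triangle inequality for the metric $m$ from Lemma~\ref{lem:metric}, which relates $m(c_0,P_i)$ and $m(c^*,P_i)$ to $\|c_0-c^*\|$. A single union bound over the $O(1)$ high-probability events (the starting-point event, the Markov upper bound, and the dichotomy for the lower bound) then produces the claimed statement with constant probability.
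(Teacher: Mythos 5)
Your high-level plan matches the paper's: choose $c_0$ as in Lemma~\ref{lem:startingpoint}, form an empirical estimate $\hat f(c_0)=\sum_{P\in\mathcal S} m(c_0,P)$ from a fresh uniform sample of size $1/\eps$, and split on whether the starting point already satisfies the case-(b) bound. The Markov argument for the upper bound is fine. However, there are two concrete problems in the lower-bound part, one a calibration error and one a genuine gap.

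First, the $\eps$-rescaling $\tilde R=\eps\hat f(c_0)$ forces you to prove $\hat f(c_0)\ge\mu$, which is stronger than what the lemma's target $\tilde R\ge\eps\mu$ demands if you use the unscaled estimator $\tilde R=\hat f(c_0)$ (as the paper does). The paper only shows $\hat f(c_0)\ge\eps\mu$: under the assumption $\|c_0-c^*\|>4\eps\mu$, it shows that at least a $\Theta(\eps)$ fraction of all sets contain a point outside $B(c_0,\eps\mu)$, so among $1/\eps$ uniform samples one such set appears with constant probability, contributing $\ge\eps\mu$. Nothing forces a sampled set to contribute $\ge\mu$, so aiming for $\hat f(c_0)\ge\mu$ is needlessly ambitious and your scaling should be dropped.

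Second, and more importantly, the dichotomy you propose does not close the lower bound. In the branch ``only a few sets carry most of $f(c_0)$, each contributing $\gg\mu$, so a single hit suffices,'' a hit is not guaranteed: if only, say, $\eps^2 N$ sets are heavy, a sample of size $1/\eps$ misses all of them with probability $\approx 1-\eps$, and the remaining sampled sets can contribute essentially nothing. You gesture at using $\|c_0-c^*\|>4\eps\mu$, the optimality bound $f(c_0)\ge N\mu$, and the triangle inequality for $m$, which are indeed the right ingredients, but the way the paper uses them is structurally different: its dichotomy is whether at least a $(1-O(\eps))$ fraction of the sets are entirely contained in the small ball $B(c_0,\eps\mu)$ (intersected with a large ball $B(c^*,\mu/\eps^2)$ that captures almost all sets). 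If so, a short calculation shows $f(c_0)<f(c^*)$ whenever $\|c_0-c^*\|>4\eps\mu$, a contradiction with optimality of $c^*$, so this branch lands directly in case (b) and requires no lower bound on $\hat f(c_0)$ at all. If not, then a constant $\Theta(\eps)$ fraction of sets lie outside $B(c_0,\eps\mu)$ and the sampling argument gives $\hat f(c_0)\ge\eps\mu$. Your ``few heavy sets'' branch therefore needs to be replaced by this ``concentrated near $c_0$ $\Rightarrow$ case (b)'' argument; as written, it asserts a constant-probability hit that you cannot deliver.
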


Lemma \ref{lem:Restimate} has the following consequence. Either the initial center $c_0$ is already a $(1+4\eps)$-approximation, in which case we are done. Or we are close enough to an optimal solution and have a good estimate on the step size to find a $(1+4\eps)$-approximation in a constant number of iterations.

Another issue that we need to take care of, is finding the best center in the last line of Algorithm \ref{alg:subgradient} efficiently. We cannot do this exactly since evaluating the cost even for one single center takes time $O(d{n}N)$. However, we can find a point that is a $(1+\eps)$-approximation of the best center in a finite set of candidate centers using a result from the theory of discrete metric spaces.

To this end we can apply our next theorem which is originally due to Indyk and Thorup in \cite{IndykDiss, Thorup04} and adapted here to work in our setting. The main difference is that in the original work the set of input points and the set of candidate solutions are identical. In our setting, however, we have that the collection of input sets and the set of candidate solutions may be completely distinct and the distance measure is the maximum distance (see Lemma~\ref{lem:metric}).
\begin{theorem}
\label{thm:thorup}
Let $\mathcal{Q}$ be a set of uniform samples with repetition from $\mathcal{P}$. Let $\mathcal{C}$ be a set of candidate solutions. Let $a\in \mathcal{C}$ minimize $\sum_{Q\in \mathcal{Q}} m(a,Q)$ and let $\hat c=\operatorname{argmin}_{c\in\mathcal C} f(c)$. Then 
\[
\Pr{\sum\nolimits_{{P}\in\mathcal{P}} m\left(a,P\right) {>} \left(1+\eps\right) \sum\nolimits_{{P}\in\mathcal{P}} m\left(\hat c,P\right)} \leq |\mathcal C| \cdot e^{-\eps^2 |\mathcal Q|/64}.
\]
\end{theorem}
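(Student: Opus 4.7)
The plan is to apply Hoeffding's inequality to bound the sample fluctuation for each candidate center separately, and then take a union bound over the at most $|\mathcal C|$ candidates. For each $c\in\mathcal C$ consider the i.i.d.\ random variables $X^c_Q := m(c,Q) - m(\hat c,Q)$ indexed by the samples $Q\in\mathcal Q$ drawn uniformly from $\mathcal P$. By linearity, $\Ex{X^c_Q} = (f(c)-f(\hat c))/N$, and applying the triangle inequality of the metric $m$ from Lemma~\ref{lem:metric} to the singletons $\{c\}$ and $\{\hat c\}$ yields the uniform bound $|X^c_Q| \le m(\{c\},\{\hat c\}) = \norm{c - \hat c}$ for every realization of $Q$.

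Call a candidate $c$ \emph{bad} if $f(c) > (1+\eps)f(\hat c)$ and set $\Delta := f(c)-f(\hat c) > \eps f(\hat c)$. The key step is to control $\norm{c-\hat c}$ by the excess cost; summing the triangle inequality in the opposite direction over every $P\in\mathcal P$ gives $N\norm{c-\hat c} \le \sum_{P\in\mathcal P}(m(c,P) + m(\hat c,P)) = f(c)+f(\hat c) = 2f(\hat c) + \Delta \le (2/\eps + 1)\Delta \le 3\Delta/\eps$ using $\eps<1$. Since the empirical minimizer being equal to $c$ forces $\sum_{Q\in\mathcal Q}X^c_Q \le 0$, this is a deviation of at least $|\mathcal Q|\Delta/N$ below the mean, and Hoeffding's inequality applied to the bounded i.i.d.\ variables $X^c_{Q} \in [-\norm{c-\hat c},\norm{c-\hat c}]$ gives
\[
\Pr{\sum\nolimits_{Q\in\mathcal Q} X^c_Q \le 0} \le \exp\left(-\frac{2|\mathcal Q|(\Delta/N)^2}{(2\norm{c-\hat c})^2}\right) \le \exp\left(-\frac{|\mathcal Q|\eps^2}{18}\right) \le \exp\left(-\frac{|\mathcal Q|\eps^2}{64}\right).
\]

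A union bound over the (at most $|\mathcal C|$) bad candidates bounds the probability that the empirical minimizer $a$ is itself bad by $|\mathcal C|\cdot e^{-|\mathcal Q|\eps^2/64}$, which is the claimed inequality. The main obstacle is the second step: making the Hoeffding exponent uniform in $c$ by collapsing the raw ratio $\norm{c-\hat c}^2/(\Delta/N)^2$ to a clean $O(1/\eps^2)$. Without the cost-based upper bound $N\norm{c-\hat c}\le 3\Delta/\eps$ this ratio could be arbitrarily large for distant bad centers, and the required sample size would depend on the geometry of $\mathcal C$ rather than only on $\eps$. The rest is a direct adaptation of the Indyk--Thorup discrete $1$-median sampling argument, lifted from points to non-empty finite sets via the metric $m$ of Lemma~\ref{lem:metric}.
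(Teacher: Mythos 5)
Your proof is correct and follows essentially the same route as the paper: both arguments hinge on the two triangle inequalities $|m(c,Q)-m(\hat c,Q)|\le\norm{c-\hat c}$ and $N\norm{c-\hat c}\le f(c)+f(\hat c)$ (the latter forcing the per-sample gap to be an $\Omega(\eps)$-fraction of its range for any bad candidate), followed by a concentration bound and a union bound over $\mathcal C$. The only difference is cosmetic: you apply additive Hoeffding to the raw differences where the paper normalizes them into $[0,1]$ and uses a multiplicative Chernoff bound, and your constant $18$ in the exponent is in fact slightly better than the stated $64$.
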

Putting all pieces together we have the following Theorem.
\begin{theorem}
\label{thm:mainresult}
Consider an input $\mathcal{P}=\lbrace P_1,\ldots, P_N\rbrace$, where for every $i\in[N]$ we have $P_i\subset \REAL^d$ and $n = \max\{ |P_i|\mid i\in[N] \}$. There exists an algorithm that computes a center $\tilde c$ that is with constant probability a $(1+\eps)$-approximation to the optimal solution $c^*$ of the \medgen problem (see Definition~\ref{def:generalmedian}). Its running time is $\O{d{n}/\eps^4 \cdot \log^2 1/\eps}$.
\end{theorem}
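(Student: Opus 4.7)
The plan is to assemble the ingredients developed in the preceding lemmas into one algorithm and to analyse its accuracy and running time. The first step is to invoke Lemma~\ref{lem:Restimate} on an $\O{1/\eps}$-size sample to obtain an initial center $c_0$ and an estimate $\tilde R$ in time $\O{dn/\eps}$. In case (b) of that lemma we have $\norm{c_0-c^*}\leq 4\eps f(c^*)/N$, and since $f$ is $N$-Lipschitz (Lemma~\ref{lem:nlipschitz}) this directly yields $f(c_0)\leq (1+4\eps)f(c^*)$ and we are done. In case (a) the value $\tilde R$ is sandwiched between $\eps f(c^*)/N$ and $(2/\eps^3)f(c^*)/N$ and the true distance $R=\norm{c_0-c^*}$ is at most $8f(c^*)/N$, which is a constant factor approximation of the ideal step-size scale.

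Since the multiplicative gap between the bounds on $\tilde R$ is $\O{1/\eps^4}$, the next step is to run Algorithm~\ref{alg:subgradient} a total of $\O{\log 1/\eps}$ times, once for each step size $s$ in a geometric sequence covering the whole interval guaranteed by Lemma~\ref{lem:Restimate}, in each run using $\ell=\O{1/\eps^2}$ iterations starting from $c_0$. For the step size that matches $R/\sqrt{\ell+1}$ up to a constant factor, Theorem~\ref{thm:gradboundprob} together with $R=\O{f(c^*)/N}$ gives $\Exp{\tilde c}{f(\tilde c)-f(c^*)}=\O{\eps f(c^*)}$, which by Markov's inequality yields a $(1+\O{\eps})$-approximation with constant probability. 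Each iteration evaluates one sampled subgradient via Lemma~\ref{lem:expgradient} in $\O{dn}$ time, so the entire sweep uses $\O{dn/\eps^2\cdot\log 1/\eps}$ time and produces a set $\mathcal C$ of $|\mathcal C|=\O{\log(1/\eps)/\eps^2}$ candidate centers (the iterates across all runs), one of which is a $(1+\O{\eps})$-approximation.

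Since evaluating $f$ exactly on even a single candidate would cost $\O{dnN}$, we cannot scan $\mathcal C$ with the exact objective. Instead, the third step is to draw a uniform sample $\mathcal Q\subseteq\mathcal P$ with repetition of size $|\mathcal Q|=\O{\log |\mathcal C|/\eps^2}=\O{\log(1/\eps)/\eps^2}$ and to return the minimizer of $\sum_{Q\in\mathcal Q} m(\cdot,Q)$ over $\mathcal C$. By Theorem~\ref{thm:thorup}, the choice of $|\mathcal Q|$ makes $|\mathcal C|\cdot e^{-\eps^2|\mathcal Q|/64}$ a small constant, so this selection loses only another $(1+\eps)$ factor relative to the best candidate in $\mathcal C$. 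Evaluating the sampled cost of one candidate takes $\O{dn\cdot|\mathcal Q|}$ time, so the selection dominates the total running time at $\O{dn\cdot|\mathcal C|\cdot|\mathcal Q|}=\O{dn/\eps^4\cdot\log^2 1/\eps}$. Composing the two $(1+\O{\eps})$-approximation errors and rescaling $\eps$ by a constant yields the claimed $(1+\eps)$-approximation.

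The main obstacle is that neither $R$ nor $f(c^*)$ is ever known within a small constant factor, which forces the logarithmic geometric sweep over step sizes and the sample-based discretised minimisation; the most delicate part of the argument is bookkeeping the constant failure probabilities of the three stages (initialisation via Lemma~\ref{lem:Restimate}, the one good step size among the $\O{\log 1/\eps}$ runs of Algorithm~\ref{alg:subgradient}, and the Indyk--Thorup sampling in Theorem~\ref{thm:thorup}) so that a single union bound preserves an overall constant success probability, which, if desired, can be amplified to $1-\eta$ by $\O{\log 1/\eta}$ repetitions as noted in the general notation paragraph.
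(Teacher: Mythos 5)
Your proposal is correct and follows essentially the same route as the paper's proof: initialization via Lemma~\ref{lem:Restimate}, a geometric sweep over $\O{\log 1/\eps}$ step sizes each run for $\O{1/\eps^2}$ iterations with accuracy controlled by Theorem~\ref{thm:gradboundprob} and Markov, and a final sample-based selection among the $\O{(1/\eps^2)\log 1/\eps}$ iterates via Theorem~\ref{thm:thorup}, with the selection stage dominating the running time at $\O{dn/\eps^4\cdot\log^2 1/\eps}$. The decomposition, the use of each lemma, the error composition with $\eps$-rescaling, and the union bound over failure events all match the paper.
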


\subsection{On reducing the size of the input sets}
\label{sec:coresets}
We would like to remove additionally the linear dependence on {$n$} for the maximum distance computations. This is motivated from the streaming extension of \cite{MunteanuSF14} where one aims at reading the input once in linear time and all subsequent computations should be sublinear, or preferably independent of $n$. To this end a grid based strong coreset of size $1/\eps^{\Theta(d)}$ was used. However, here we focus on reducing the dependence on $d$, and exponential is not an option if we want to work in high dimensions. It turns out that without introducing an exponential dependence on $d$, we would have to lose a constant approximation factor. Pagh \etal \cite{PaghSSS17} showed that if the coreset is a subset of the input and approximates furthest neighbor queries to within less than a factor of roughly $\sqrt{2}$, then it must consist of $\Omega\left(\min\lbrace n, \exp (d^{1/3})\rbrace \right)$ points. This was shown via a carefully constructed input point set of Agarwal and Sharathkumar \cite{AgarwalS15} who used it to prove lower bounds on streaming algorithms for several extent problems. 

In the next theorem, we review the techniques of the latter reference to show a slightly stronger result, namely no small data structure can exist for answering maximum distance queries to within a factor of less than roughly $\sqrt{2}$. In comparison to the previous results \cite{AgarwalS15,PaghSSS17}, it is not limited to the streaming setting, and it is not restricted to subsets of the input.

\begin{theorem}
	\label{lem:datastructurelower}
	Any data structure that, with probability at least $2/3$, $\alpha$-approximates maximum distance queries on a set $S\subset \REAL^d$ of size $|S|=n$, for $\alpha <\sqrt{2} \left( 1-2/d^{1/3}\right)$, requires $\Omega\left(\min\lbrace n, \exp \left(d^{1/3}\right)\rbrace \right)$ bits of storage.
\end{theorem}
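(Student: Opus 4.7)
The plan is to prove this lower bound via a reduction from the one-way communication problem \emph{Indexing}, in which Alice receives $x\in\{0,1\}^M$, sends a single message to Bob, and Bob, holding an index $i^*\in[M]$, must output $x_{i^*}$ with probability at least $2/3$. Classically this has randomized one-way communication complexity $\Omega(M)$, so any data structure supporting the reduction must use $\Omega(M)$ bits. I would set $k=\lfloor d^{1/3}\rfloor$ and $M=\min\br{n,\lfloor e^{ck}\rfloor}$ for a sufficiently small absolute constant $c>0$ to be fixed below.

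The key geometric ingredient is a family of $M$ near-orthogonal unit vectors $v_1,\dots,v_M\in\REAL^d$ satisfying $|\langle v_i,v_j\rangle|\leq 2/k$ for all $i\neq j$. Drawing each coordinate of each $v_i$ independently and uniformly from $\br{\pm 1/\sqrt{d}}$, Hoeffding's inequality gives $\Pr{|\langle v_i,v_j\rangle|\geq 2/k}\leq 2\exp(-2d/k^2)=2\exp(-2d^{1/3})$, so a union bound over the at most $M^2$ pairs succeeds with positive probability once $c$ is chosen small enough. In particular $\norm{v_i-v_j}^2=2-2\langle v_i,v_j\rangle\geq 2(1-2/k)$ whenever $i\neq j$, while $\norm{v_i-(-v_i)}=2$.

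Given such a code, Alice encodes her input string $x$ as the point set $S=\br{v_i : x_i=1}$, which has $|S|\leq M\leq n$, builds the purported data structure on $S$, and sends its bits to Bob. Bob queries the data structure at $q=-v_{i^*}$. If $x_{i^*}=1$ then $v_{i^*}\in S$ and the true maximum distance is exactly $\norm{q-v_{i^*}}=2$; if $x_{i^*}=0$ then every $v_i\in S$ satisfies $i\neq i^*$, so $\norm{q-v_i}^2=2+2\langle v_{i^*},v_i\rangle\leq 2+4/k$ and the maximum distance is at most $\sqrt{2+4/k}$. Squaring and simplifying shows $2/\sqrt{2+4/k}\geq \sqrt{2}\,(1-2/k)\geq \sqrt{2}\,(1-2/d^{1/3})>\alpha$ for $k\geq 2$, so the $\alpha$-approximate answers lie in disjoint intervals in the two cases and Bob recovers $x_{i^*}$ whenever the data structure succeeds, which by assumption happens with probability at least $2/3$. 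The Indexing lower bound then forces the transcript, and thus the data structure, to use $\Omega(M)=\Omega(\min\br{n,\exp(d^{1/3})})$ bits.

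The main obstacle I anticipate is aligning the parameters $k,M,c$ so that three constraints hold simultaneously: the near-orthogonal family of size $M=\exp(\Omega(d^{1/3}))$ exists (which needs $M^2\cdot\exp(-\Omega(d/k^2))=o(1)$), the gap $2/\sqrt{2+4/k}$ strictly exceeds the threshold $\sqrt{2}\,(1-2/d^{1/3})$, and the resulting $M$ matches the stated $\min\br{n,\exp(d^{1/3})}$ bound. None of these is deep individually, but the constants in the exponent must be chosen carefully and consistently. Phrasing the argument in terms of communication complexity is precisely what lifts the restrictions present in~\cite{AgarwalS15,PaghSSS17}: the data structure is free to produce arbitrary bits rather than a subset of the input, and the bound is no longer confined to the streaming model.
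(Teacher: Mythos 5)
Your reduction is the same as the paper's: a one-way communication lower bound for Indexing, where Alice encodes her bit string as a subset of a family of nearly equidistant unit vectors, Bob queries the antipode $-v_{i^*}$, and the gap between the antipodal distance $2$ and the generic distance $\approx\sqrt{2}$ is what any $\alpha$-approximation with $\alpha<\sqrt 2(1-2/d^{1/3})$ must preserve. The only substantive difference is that the paper invokes the centrally symmetric point set of Agarwal and Sharathkumar as a black box, whereas you build the near-orthogonal family yourself from random $\pm 1/\sqrt d$ sign vectors via Hoeffding and a union bound; that makes the argument self-contained at no real cost, and your union bound in fact tolerates $M$ up to about $e^{d^{1/3}}/\sqrt 2$, so you can take $c=1$ (writing ``sufficiently small $c$'' literally would only give $\Omega(\min\{n,e^{cd^{1/3}}\})$, which is weaker than the stated bound since the constant sits in the exponent). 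One intermediate inequality is stated backwards: with $k=\lfloor d^{1/3}\rfloor\leq d^{1/3}$ you have $1-2/k\leq 1-2/d^{1/3}$, so the chain $2/\sqrt{2+4/k}\geq\sqrt2(1-2/k)\geq\sqrt2(1-2/d^{1/3})$ fails in the middle. The conclusion survives because the first step has slack: $2/\sqrt{2+4/k}=\sqrt2/\sqrt{1+2/k}\geq\sqrt2(1-1/k)$, and $1/k\leq 1/(d^{1/3}-1)\leq 2/d^{1/3}$ once $d\geq 8$. Neither issue is a gap in the idea, but both need to be repaired for the parameters to match the theorem as stated.
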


On the positive side it was shown by Goel \etal \cite{GoelIV01} that a $\sqrt{2}$-approximate furthest neighbor to the point $c\in\REAL^d$ can always be found on the surface of the smallest enclosing ball of the sets $P_i$, using linear preprocessing time $\tilde{O}\left( dn \right)$ and $\tilde{O}\left( d^2\right)$ query time. Thus, if we plug in the coresets of B\u{a}doiu and Clarkson \cite{BadoiuC03,BadCla08} of size $O(1/\eps)$ instead of the entire sets $P_i\in\mathcal{P}$ to evaluate $m(c,P_i)$, we would have a sublinear time algorithm (in $n$) after reading the input via a $\sqrt{2}\left( 1+\eps\right)$-approximation to any query $m(c,P_i)$. In a streaming setting the same bound can be achieved via the \emph{blurred-ball-cover} of Agarwal and Sharathkumar \cite{AgarwalS15} of slightly larger size $\O{1/\eps^3 \cdot\log 1/\eps}$. Otherwise, Goel \etal \cite{GoelIV01} have shown that using $\O{dn^{1+1/(1+\eps)}}$ preprocessing time and $\tilde{O}\left( dn^{1/(1+\eps)}\right)$ query time, one can obtain a $(1+\eps)$-approximation for the furthest neighbor problem. Note that in this case the preprocessing time is already superlinear in $n$, and in particular the exponent is already larger than $1.7$ for $1+\eps<\sqrt{2}$.

\section{Applications}
\subsection{Probabilistic smallest enclosing ball}
\label{sec:pSEB}
We apply our result to the probabilistic smallest enclosing ball problem, as given in \cite{MunteanuSF14}. In such a setting, the input is a set $\mathcal{D}=\lbrace D_1,\ldots, D_n\rbrace$ of $n$ discrete and independent probability distributions. The $i$-th distribution $D_i$ is defined over a set of $z$ possible locations $q_{i,j}\in \REAL^d \cup \lbrace \bot \rbrace$, for $j\in \left[ z\right]$, where $\bot$ indicates that the $i$-th point is not present in a sampled set, i.e., $q_{i,j}=\bot \Leftrightarrow \lbrace q_{i,j}\rbrace = \emptyset$. We call these points \emph{probabilistic points}. Each location $q_{i,j}$ is associated with the probability $p_{i,j}$, such that $\sum_{j=1}^z p_{i,j} =1$, for every $i\in \left[ n \right]$. Thus the probabilistic points can be considered as independent random variables $X_i$.

A probabilistic set $X$ consisting of probabilistic points is also a random variable, where for each random choice of indices $\left( j_1,\ldots,j_n\right) \in \left[z\right]^n$ there is a realization $P_{\left( j_1,\ldots,j_n\right)} = X\left( j_1,\ldots,j_n\right)= \left( q_{1,j_1},\ldots, q_{n,j_n}\right)$. By independence of the distributions $D_i$, $i\in \left[ n\right]$, it holds that $\Pr{X=P_{\left( j_1,\ldots,j_n\right)}} = \prod_{i=1}^n p_{i,j_i}$.

The probabilistic smallest enclosing ball problem is defined as follows. Here we may assume that the distance of any point $c\in \REAL^d$ to the empty set is 0.
\begin{definition}{(\cite{MunteanuSF14})}
\label{def:probseb}
	Let $\mathcal{D}$ be a set of $n$ discrete distributions, where each distribution is defined over $z$ locations in $\REAL^d\cup\br{\bot}$.
	The probabilistic smallest enclosing ball problem is to find a center $c^*\in\REAL^d$ that minimizes the expected smallest enclosing ball cost, i.e., \[ c^* \in \operatorname{argmin}_{c\in\mathbb{R}^d} \Exp{X}{m(c,X)}, \]
	where the expectation is taken over the randomness of $X\sim \mathcal D$.
\end{definition}

The authors of \cite{MunteanuSF14} showed a reduction of  
the probabilistic smallest enclosing ball problem to computing a 
solution for the \medgen problem of Definition~\ref{def:generalmedian}. Their algorithm distinguishes between two cases. In the first case the probability of obtaining a nonempty realization $P\neq\emptyset$ is small, more formally $\sum_{q_{i,j}\in {Q}} p_{i,j} \leq \eps$, {where $Q = \lbrace q_{i,j}\mid q_{i,j}\neq\bot, i\in [n], j\in [z]\rbrace$}, and thus we have little chance of gaining information by sampling realizations. However, it was shown that 
\begin{align*}
(1-\eps)\cdot \Exp{X}{\sum\limits_{p\in X} \norm{c -p} } \leq \Exp{X}{m(c,X)} \leq \Exp{X}{\sum\limits_{p\in X} \norm{c -p} },
\end{align*}
where $\Exp{X}{\sum_{p\in X} \norm{c -p} } = \sum_{i,j} p_{i,j}\cdot \norm{c-q_{i,j}}$ is a weighted version of the deterministic 1-median problem, cf. \cite{CormodeM08}, and thus also a weighted instance of the \medgen problem. In the second case, the probability that a realization contains at least one point is reasonably large. Therefore by definition of the expected value and $m(c,\emptyset)=0$ we have 
\[
\Exp{X}{m(c,X)} = \sum\nolimits_{P\neq \emptyset} \Pr{X=P}\cdot m(c,P),
\] which is a weighted version of the \medgen problem. Depending on these two cases, we sample a number of elements, non-empty locations or non-empty realizations, and solve the resulting \medgen problem using the samples in Theorem \ref{thm:mainresult} {for computing the approximate subgradients}.

Algorithm~\ref{alg:SEB} adapts this framework. It differs mainly in three points from the previous algorithm of \cite{MunteanuSF14}. First, the number of samples had a dependence on $d$ hidden in the $O$-notation. This is not the case any more. Second, the sampled realizations are not sketched via coresets of size $1/\eps^{\Theta(d)}$ any more, as discussed in Section \ref{sec:coresets}. Third, the running time of the actual optimization task is reduced via Theorem \ref{thm:mainresult} instead of an exhaustive grid search.

\begin{algorithm}[ht]
\KwData{A set $\mathcal{D}$ of $n$ point distributions over {$z$} locations in $\REAL^d$, a parameter $\eps<1/9$}
\KwResult{A center $\hat c \in \REAL^d$}
\caption{Probabilistic smallest enclosing ball}\label{alg:SEB}
	$Q\leftarrow \lbrace q_{i,j}\mid q_{i,j}\neq\bot, i\in [n], j\in [z]\rbrace$ \tcc*[r]{the set of non-empty locations}
	Set a sample size $k\in \O{1/\eps^2 \log (1/\eps)}$\;
	\If{$\sum_{q_{i,j}\in Q} p_{i,j} \leq \eps$}{
		- Pick a random sample $R$ of $k$ locations from $\mathcal P = Q$, where for every $r\in R$ we have $r=q_{ij}$ with probability proportional to $p_{ij}$\;
		- Compute $\hat c\in \REAL^d$ that is a $(1+\eps)$-approximation 
		using the sampled points $R$ one-by-one for computing the approximate {sub}gradients in the algorithm of Theorem~\ref{thm:mainresult}\;
	}
	\Else{
		- Sample a set $R$ of $k$ non-empty realizations from the input distributions $\mathcal{D}$\;
		- Compute $\hat c\in \REAL^d$ that is a $(1+\eps)$-approximation 
		using the sampled realizations $R$ one-by-one for computing the approximate {sub}gradients in the algorithm of Theorem~\ref{thm:mainresult}\;
	}
	\Return $\hat c$\;
\end{algorithm}

\begin{theorem}
\label{thm:SEBalgorithm}
Let $\mathcal{D}$ be a set of $n$ discrete distributions, where each distribution is defined over $z$ locations in $\REAL^d \cup \lbrace\bot\rbrace$. Let $\tilde c\in\REAL^d$ denote the output of Algorithm~\ref{alg:SEB} on input $\mathcal{D}$, and {let} the approximation parameter be $\eps<1/9$. Then with constant probability the output is a $(1+\eps)$-approximation for the probabilistic smallest enclosing ball problem. I.e., it holds that 
\[
\Exp{X}{m(\tilde c,X)} \leq (1+\eps) \min\nolimits_{c\in \REAL^d} \Exp{X}{m(c,X)}.
\]
The running time of Algorithm~\ref{alg:SEB} is $\O{dn\cdot ( z/\eps^3 \cdot \log 1/\eps +  1/\eps^4 \cdot \log^2 1/\eps)}$.
\end{theorem}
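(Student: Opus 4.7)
The plan is to follow the case split in Algorithm~\ref{alg:SEB}, reducing the probabilistic smallest enclosing ball problem to a weighted set median instance in each case, and then invoking Theorem~\ref{thm:mainresult} on the sample for the actual optimization. The discussion preceding the algorithm already gives the two formulas that exhibit pSEB as (a $(1+\eps)$-approximation of) a weighted 1-median instance when the total mass of non-empty locations is at most $\eps$, and exactly as a weighted set median instance over realizations otherwise. The correctness proof thus consists of (i) arguing that $k\in O(\eps^{-2}\log 1/\eps)$ uniform samples suffice to preserve the optimum within a $(1+\eps)$-factor, and (ii) paying another $(1+\eps)$ for applying Theorem~\ref{thm:mainresult} on the sample. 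Combining a constant number of $(1+O(\eps))$-factors and rescaling $\eps$ yields the claimed bound.

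For the first case ($\sum_{q_{i,j}\in Q} p_{i,j} \leq \eps$), the objective sandwich from the excerpt shows that minimizing $\Exp{X}{m(c,X)}$ is a $(1+O(\eps))$-approximation of minimizing the weighted 1-median cost $\sum_{i,j} p_{i,j}\norm{c-q_{i,j}}$, i.e.\ the set median instance $\mathcal P = Q$ with each point weighted by $p_{i,j}$. The set $R$ drawn with probabilities proportional to $p_{i,j}$ is then an unbiased sample of this weighted cost, and feeding its elements one at a time as stochastic subgradients into the algorithm underlying Theorem~\ref{thm:mainresult} is exactly what is needed: each iteration already samples a single term of the sum, and we just replace the uniform choice by a $p_{i,j}$-weighted one. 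Thus Theorem~\ref{thm:mainresult} returns a $(1+\eps)$-approximation with constant probability.

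For the second case, Markov's inequality guarantees that an independent draw from $\mathcal D$ yields a non-empty realization with probability $\Omega(\eps)$, so $O(k/\eps)$ raw draws suffice in expectation to produce $k$ non-empty realizations $R$; by definition $\Pr{X=P \mid X\neq\emptyset}$ is the correct distribution to sample to obtain an unbiased estimator of $\Exp{X}{m(c,X)}=\sum_{P\neq\emptyset}\Pr{X=P}m(c,P)$, up to the known normalization $\Pr{X\neq\emptyset}\in[\eps,1]$. Again $R$ defines a weighted set median instance whose optimum equals that of pSEB up to a global scaling, and Theorem~\ref{thm:mainresult} applied on $R$ with stochastic subgradients returns a $(1+\eps)$-approximation. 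The only subtlety here is that, unlike the first case, the sampled realizations $P\subseteq \REAL^d$ are genuine sets of size up to $n$, so maximum-distance evaluations against a center cost $O(dn)$ each.

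The running time is the sum of (a) scanning $\mathcal D$ once to determine the case and to compute/maintain the sampling weights, costing $O(dnz)$, (b) drawing $k$ weighted locations or $k$ non-empty realizations, and (c) the $O(dn/\eps^4\cdot\log^2 1/\eps)$ cost of the set-median subroutine of Theorem~\ref{thm:mainresult} on input of size $N=k$ with ground sets of size at most $n$. In the first case each iteration touches a single sampled location, costing $O(d)$, while in the second case it costs $O(dn)$ to locate the furthest point in a sampled realization; summing over iterations gives the $O(dn/\eps^4\cdot\log^2 1/\eps)$ term, and the $O(dnz/\eps^3\log 1/\eps)$ term accounts for the initial pass and for generating $k$ non-empty realizations in the second case. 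The main obstacle I anticipate is accounting carefully for the weighted sampling: I need to verify that the variance bound $\Ex{\norm{\tilde g}^2}\leq 1$ underlying Theorem~\ref{thm:mainresult} survives the switch from uniform to weighted sampling, and that the estimator $\tilde R$ and starting center from Lemma~\ref{lem:Restimate} still behave correctly under this non-uniform distribution; this should follow by rescaling, but it is the one place where the black-box use of Theorem~\ref{thm:mainresult} has to be re-examined rather than invoked verbatim.
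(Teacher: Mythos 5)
Your proposal follows essentially the same route as the paper's proof: the paper likewise handles correctness by deferring to the reduction of \cite{MunteanuSF14} together with Theorem~\ref{thm:mainresult}, and its running-time accounting matches yours (weighted reservoir sampling and the $n=1$ observation in the first case; the Markov argument giving $O(k/\eps)$ trials for non-empty realizations in the second; then the $O(dn/\eps^4\log^2 1/\eps)$ optimization cost). The one issue you flag — re-checking that the unbiasedness of the stochastic subgradient, the bound $\Ex{\norm{\tilde g}^2}\leq 1$, and Lemma~\ref{lem:Restimate} survive the switch to non-uniform ($p_{i,j}$-weighted) sampling — is a real point that the paper itself leaves implicit, and it does go through by the rescaling you describe.
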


Comparing to the result of \cite{MunteanuSF14}, the running time is reduced from $O(dnz/\eps^{O(1)} + 1/\eps^{O(d)})$ to $O(dnz/\eps^{O(1)})$, i.e., our dependence on the dimension $d$ is no longer exponential but only linear. Note, in particular, that the factor of $d$ plays a role only in computations of distances between two points in $\REAL^d$. Further the sample size and the number of centers that need to be evaluated do not depend on the dimension $d$ any more. This will be crucial in the next application.

\subsection{Probabilistic support vector data description}
\label{sec:pSVDD}
Now we turn our attention to the support vector data description (SVDD) problem \cite{TaxD04} and show how to extend it to its probabilistic version. To this end, let $K\colon \REAL^d \times \REAL^d \rightarrow \REAL$ be a positive semidefinite kernel function. It is well known by Mercer's theorem, cf. \cite{ScholkopfS02}, that such a function implicitly defines the inner product of a high dimensional Hilbert space $\mathcal{H}$, say $\REAL^D$ where $D \gg d$. This means we have $K(x,y)=\langle \varphi(x), \varphi(y)\rangle$, where $\varphi\colon\REAL^d\rightarrow \mathcal H$ is the so called feature mapping associated with the kernel. Examples for such kernel functions include polynomial transformations of the standard inner product in $\REAL^d$ such as the constant, linear or higher order polynomial kernels. In these cases $D$ remains bounded but grows as a function of $d$ raised to the power of the polynomials' degree. Other examples are the exponential, squared exponential, Mat\'ern, or rational quadratic kernels, which are transformations of the Euclidean distance between the two low dimensional vectors. The dimension $D$ of their implicit feature space is in principle unbounded. Despite the large dimension $D\gg d$, all these kernels can be evaluated in time $O(d)$  \cite{RasmussenW06}.

It is known that the SVDD problem is equivalent to the smallest enclosing ball problem in the feature space induced by the kernel function \cite{TsangKC05}, i.e., given the kernel $K$ with implicit feature mapping $\varphi\colon \REAL^d \rightarrow \mathcal H$, and an input set $P\subseteq \REAL^d$, the task is to find 
\begin{equation}
	\label{dSVDD}
	c^*\in \operatorname{argmin}_{c \in \mathcal H} \max\nolimits_{p\in P}\norm{c-\varphi(p)} = \operatorname{argmin}_{c \in \mathcal H} m(c,\varphi(P)),
\end{equation}
where $\varphi(P)=\{\varphi(p)\mid p\in P\}$.

Now we extend this to the probabilistic setting as we did in Section \ref{sec:pSEB}. The input is again a set $\mathcal{D}$ of $n$ discrete and independent probability distributions, where $D_i\in\mathcal{D}$ is defined over a set of $z$ locations $q_{i,j}\in \REAL^d \cup\lbrace \bot\rbrace$. Note that the mapping $\varphi$ maps the locations $q_{i,j}$ from $\REAL^d$ to $\varphi(q_{i,j})$ in $\mathcal{H}$, and we assume $\varphi(\bot)=\bot$. Then the probabilistic SVDD problem is given by the following adaptation of Definition~\ref{def:probseb}.
\begin{definition}
	\label{def:probsvdd}
	Let $\mathcal{D}$ be a set of $n$ discrete distributions, where each distribution is defined over $z$ locations in $\REAL^d\cup\br{\bot}$. Let $K\colon \REAL^d \times \REAL^d \rightarrow \REAL$ be a kernel function with associated feature map $\varphi\colon \REAL^d \rightarrow \mathcal H$.
	The probabilistic support vector data description (pSVDD) problem is to find a center $c^*\in\mathcal H$ that minimizes the expected SVDD cost, i.e., \[ c^* \in \operatorname{argmin}_{c\in\mathcal H} \Exp{X}{m(c,\varphi(X))}, \]
	where the expectation is taken over the randomness of $X\sim \mathcal D$.
\end{definition}
Note that the deterministic problem, see Equation (\ref{dSVDD}), is often stated with squared distances \cite{TaxD04,TsangKC05}. It does not matter whether we minimize the maximum distance or any of its powers or any other monotone transformation. In the probabilistic case this is not true. Consider for instance squared distances. The resulting problem would be similar to a $1$-means rather than a $1$-median problem. Huang \etal \cite{HuangLPW16} have observed that minimizing the expected maximum squared distance corresponds to minimizing the expected area of an enclosing ball in $\REAL^2$. This observation can be generalized to the expected volume of an enclosing ball in $\REAL^p$ when the $p$-th powers of distances are considered. Considering $p=2$ might also have advantages when dealing with Gaussian input distributions due to their strong connection to squared Euclidean distances. In a general setting of the probabilistic smallest enclosing ball problem however, it is natural to minimize in expectation the maximum Euclidean distance, as in Definitions~\ref{def:probseb} and \ref{def:probsvdd}, since its radius is the primal variable to minimize.

Next we want to show how to find a $(1+\eps)$-approximation for the pSVDD problem. Explicitly computing any center $c\in\mathcal H$ takes $\Omega(D)$ time and space which is prohibitive not only when $D=\infty$. Note that for the SEB and SVDD problems, any reasonable center lies in the convex hull of the input points. Since taking the expectation is simply another linear combination over such centers, we can express any center $c\in\mathcal H$ as a linear combination of the set of non-empty locations, i.e., 
\[
	c=\sum\nolimits_{q_{u,v}\in Q} \gamma_{u,v} \varphi(q_{u,v})
\]
The idea is to exploit this characterization to simulate Algorithm~\ref{alg:subgradient} and thereby Algorithm~\ref{alg:SEB} to work in the feature space $\mathcal{H}$ by computing the centers and distances only implicitly.

	For now, assume that any distance computation can be determined. Note that sampling a set $P_i \subset \REAL^d$ is the same as sampling the set $\varphi(P_i)$ of corresponding points in $\mathcal{H}$ from the same distribution. We assume that we have a set of locations or realizations $\mathcal{P}=\lbrace P_1,\ldots, P_N\rbrace$, with $P_i\subset\REAL^d$. The remaining steps are passed to Theorem \ref{thm:mainresult} which is based on Algorithm~\ref{alg:subgradient}. First, we show the invariant that each center $c_i$ reached during its calls to Algorithm~\ref{alg:subgradient} can be updated such that we maintain a linear combination $c_i=\sum_{u,v} \gamma_{u,v} \varphi(q_{u,v})$, where at most $i+1$ terms have $\gamma_{u,v}\neq 0$.
\begin{itemize}
\item The initial center $c_0 \in \mathcal{H}$ is chosen by sampling uniformly at random a set $P\in \mathcal{P}$ via Lemma~\ref{lem:startingpoint}. We take any point $q\in P$, $q\neq \bot$, which maps to $c_0=\varphi(q)$. Thus the invariant is satisfied at the beginning, where the corresponding coefficient is $\gamma=1$ and all other coefficients are zero.

\item In each iteration we randomly sample a set $P \in \mathcal{P}$ to simulate the approximate subgradient $\tilde{g}\left(c_{i}\right)$ at the current point $c_{i}$. The vector $\tilde{g}\left(c_{i}\right)$ is a vector between $c_{i}$ and some point $p_{j,k}=\varphi(q_{j,k})\in \mathcal{H}$, such that $q_{j,k}$ maximizes $\norm{c_i-\varphi(q')}$ over all $q'\in P$ (cf. Lemma~\ref{lem:expgradient}). 

To implicitly update to the next center $c_{i+1}$ note that (cf. Algorithm~\ref{alg:subgradient})
\[
c_{i+1} =  c_{i} - s \cdot \frac{c_i-\varphi(q_{j,k})}{\norm{c_i-\varphi(q_{j,k})}} = \left(1-\frac{s}{\norm{c_i-\varphi(q_{j,k})}}\right)\cdot c_{i} + \frac{s}{\norm{c_i-\varphi(q_{j,k})}} \cdot\varphi(q_{j,k}) .
\]
Assume the invariant was valid that $c_i$ was represented as $c_i=\sum_{u,v} \gamma_{u,v} \varphi(q_{u,v})$ with at most $i+1$ non-zero coefficients. Then it also holds for the point $c_{i+1}$ since the previous non-zero coefficients of $\varphi(q_{u,v})$ are multiplied by $1-s/\norm{c_i-\varphi(q_{j,k})}$ and the newly added $\varphi(q_{j,k})$ is assigned the coefficient $s/\norm{c_i-\varphi(q_{j,k})}$. So there are at most $i+2$ non-zero coefficients.
\end{itemize}
Therefore, we do not have to store the points $c_i$ explicitly while performing Algorithm~\ref{alg:subgradient}. The implicit representation can be maintained via a list storing points that appear in the approximate subgradients and their corresponding non-zero coefficients.

To actually compute the coefficients, we need to be able to compute Euclidean distances as well as determine $s$. Using Lemma \ref{lem:Restimate} we determined the step size $s$ via an estimator $\tilde R=\sum_{P\in \mathcal{S}} m(c_0,P)$ based on a small sample $\mathcal S$. In particular this requires distance computations again. To this end, we show how to compute $\norm{c_i-\varphi(q)}$ for any location $q\in\REAL^d$. Recall that the kernel function implicitly defines the inner product in $\mathcal{H}$. It thus holds that
\begin{align}
	\label{eqn:dist_implicit} \nonumber
\norm{c_i-\varphi(q)}^2 &= \norm{\sum\nolimits_{u,v} \gamma_{u,v}\varphi(q_{u,v}) -\varphi(q)}^2 = \norm{\sum\nolimits_{w=0}^i \gamma_{w}\varphi(q_{w}) -\varphi(q)}^2 \\ \nonumber
& = \norm{\sum\nolimits_{w=0}^i \gamma_{w}\varphi(q_{w})}^2 +\norm{\varphi(q)}^2 - 2\sum\nolimits_{w=0}^i \gamma_w \left\langle \varphi(q_w), \varphi(q) \right\rangle \\ 
& = \sum\nolimits_{w = 0}^{i}\sum\nolimits_{w' = 0}^{i} 
\gamma_w\gamma_{w'} K(q_w,q_{w'}) + K(q,q) -2\sum\nolimits_{w=0}^i \gamma_w K(q_w,q),
\end{align}
where $w,w'\in\lbrace 0,\ldots,i\rbrace$ index the locations $q_w,q_{w'}$ with corresponding $\gamma_w,\gamma_{w'} \neq 0$ in iteration $i$.
Therefore, we have the following Theorem.
\begin{theorem}
	\label{thm:algsvdd}
	Let $\mathcal{D}$ be a set of $n$ discrete distributions, where each distribution is defined over $z$ locations in $\REAL^d \cup \lbrace\bot\rbrace$. There exists an algorithm that implicitly computes $\tilde c\in\mathcal H$ that with constant probability is a $(1+\eps)$-approximation for the probabilistic support vector data description problem. I.e., it holds that 
	\[
	\Exp{X}{m(\tilde c,\varphi(X))} \leq (1+\eps) \min\nolimits_{c\in \mathcal H} \Exp{X}{m(c,\varphi(X))},
	\]
	where the expectation is taken over the randomness of $X\sim\mathcal D$.
	The running time of the algorithm is $\O{dn\cdot \left( z/\eps^3 \cdot \log 1/\eps +  1/\eps^8 \cdot \log^2 1/\eps\right)}$.
\end{theorem}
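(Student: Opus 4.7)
The plan is to simulate the algorithm of Theorem \ref{thm:SEBalgorithm} \emph{implicitly} inside the feature space $\mathcal{H}$, using the kernel to evaluate Hilbert-space distances and inner products on demand. Definition \ref{def:probsvdd} is structurally identical to Definition \ref{def:probseb} once every location is composed with $\varphi$, and every subroutine that Algorithm \ref{alg:SEB} invokes --- the sampling of locations or realizations, the stochastic subgradient loop of Algorithm \ref{alg:subgradient}, the $\tilde R$-estimator of Lemma \ref{lem:Restimate}, and the discrete candidate-selection via Theorem \ref{thm:thorup} --- ultimately reduces to comparing sums or maxima of Hilbert-space distances. Therefore, if every such distance can be computed exactly without ever materialising a $D$-dimensional vector, the correctness analysis of Theorem \ref{thm:SEBalgorithm} applies verbatim inside $\mathcal{H}$ and yields the claimed $(1+\eps)$-approximation guarantee for pSVDD.

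The invariant already stated in the excerpt is the heart of the simulation: during the $i$-th iteration of Algorithm \ref{alg:subgradient}, the current center is maintained as a short list of pairs $(q_w,\gamma_w)$ with $c_i = \sum_w \gamma_w \varphi(q_w)$ and at most $i+1$ nonzero coefficients. The base case $c_0 = \varphi(q)$ supplied by Lemma \ref{lem:startingpoint} trivially satisfies it; each subgradient step rescales all existing coefficients by $1 - s/\norm{c_i-\varphi(q_{j,k})}$ and appends the new coefficient $s/\norm{c_i-\varphi(q_{j,k})}$ attached to $\varphi(q_{j,k})$, preserving the invariant with one additional support point. Equation (\ref{eqn:dist_implicit}) then provides an exact closed form for every distance $\norm{c_i - \varphi(q)}$ in terms of kernel values, so the random choice of an approximate subgradient (Lemma \ref{lem:expgradient}), the estimator $\tilde R$ (Lemma \ref{lem:Restimate}), and the evaluation of $\sum_{Q\in\mathcal{Q}} m(a,Q)$ required by Theorem \ref{thm:thorup} all become pure kernel computations. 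Since the numbers they produce coincide with the true Hilbert-space distances, the simulated random process has the same distribution as the explicit one, and the approximation guarantee is inherited.

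For the running time, the input reading, the test $\sum p_{i,j}\leq \eps$, and the sampling part of Algorithm \ref{alg:SEB} are dimension-oblivious and still contribute $\O{dnz/\eps^3\log 1/\eps}$. The extra overhead comes from kernel-based distance evaluations. A single query $\norm{c_i - \varphi(q)}^2$ via (\ref{eqn:dist_implicit}) costs $\O{i^2 d}$ kernel operations, which at the final iteration is $\O{d/\eps^4}$; such queries appear inside the $\ell = \O{1/\eps^2}$ subgradient steps, inside the estimator of Lemma \ref{lem:Restimate}, and inside the final scoring of the $|\mathcal{C}| = \ell$ candidate centers against the $|\mathcal{Q}| = \O{\eps^{-2} \log 1/\eps}$ sampled sets of size at most $n$. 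The final scoring dominates and costs $|\mathcal{C}| \cdot |\mathcal{Q}| \cdot n \cdot \O{\ell^2 d} = \O{dn/\eps^8 \log 1/\eps}$, and combining with the $\O{\log 1/\eps}$ independent repetitions needed for probability boosting yields the stated bound. The main obstacle of the proof is precisely the audit underlying the previous paragraph: every subroutine used in the chain Theorem \ref{thm:SEBalgorithm} $\to$ Theorem \ref{thm:mainresult} $\to$ Lemmas \ref{lem:startingpoint}, \ref{lem:Restimate}, Theorem \ref{thm:thorup} must be checked to touch a center $c\in\mathcal{H}$ only through inner-product or distance primitives, since even a single coordinate access would force writing down a vector in the possibly infinite-dimensional $\mathcal{H}$ and destroy both the correctness argument and the running time.
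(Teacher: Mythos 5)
Your proposal matches the paper's proof in essence: correctness is inherited from Theorem~\ref{thm:SEBalgorithm} because the coefficient-list invariant together with the kernel expansion in Equation~(\ref{eqn:dist_implicit}) reproduces every Hilbert-space distance exactly, and the running time is dominated by the final candidate evaluation at $\O{dn/\eps^8\cdot\log^2 1/\eps}$. One bookkeeping slip: the second $\log 1/\eps$ factor does not come from probability amplification (the theorem is stated with constant success probability); it comes from the candidate set $\mathcal{C}$ having size $\O{1/\eps^2\cdot\log 1/\eps}$ rather than $\ell$, since the main loop is rerun for $\O{\log 1/\eps}$ guessed step sizes $\tilde R_j$ and all resulting centers are retained --- the product $|\mathcal{C}|\cdot|\mathcal{Q}|$ then supplies both logarithms.
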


\section{Conclusion and open problems}
We studied the \medgen problem in high dimensions that minimizes the sum of maximum distances to the furthest point in each input set. We presented a $(1+\varepsilon)$-approximation algorithm whose running time is linear in $d$ and independent of the number of input sets. We further discussed that in high dimensions the size of the input sets cannot be reduced sublinearly without losing a factor of roughly $\sqrt{2}$.
Our work resolves an open problem of \cite{MunteanuSF14} and improves the previously best algorithm for the probabilistic smallest enclosing ball problem in high dimensions by reducing the dependence on $d$ from exponential to linear. This enables running the algorithm in high dimensional Hilbert spaces induced by kernel functions which makes it more flexible and viable as a building block in machine learning and data analysis. As an example we transferred the kernel based SVDD problem of \cite{TaxD04} to the probabilistic data setting.
Our algorithms assume discrete input distributions. It would be interesting to extend them to various continuous distributions. The pSEB problem minimizes the expected maximum distance. When it comes to minimizing volumes of balls or in the context of Gaussian distributions it might be interesting to study higher moments of this variable. This corresponds to a generalization of the \medgen problem to minimizing the sum of higher powers of maximum distances. Finally we hope that our methods may help to extend more shape fitting and machine learning problems to the probabilistic setting.


\bibliography{prob1med}

\appendix

\section{Omitted proofs}
\label{append}

\begin{proof}[Proof of Lemma \ref{lem:metric}]
	The non-negativity and symmetry properties of $m$ follow from the corresponding metric properties in Euclidean space $(\REAL^d,\norm{\cdot})$ and by definition. If $A=B$, then $m(A,B)=0$ holds by definition. Otherwise there exist elements $a\in A$, $b\in B$, $a\neq b$, and thus $m(A,B)\geq \norm{a-b}>0$. This proves the identity of indiscernible elements.
	
	To prove the validity of the triangle inequality, let $A,B,C\in\mathcal{X}$ be distinct. Let $a\in A$, $c\in C$ be points such that $m(A,C)=\norm{a-c}$. For any $b\in B$ it holds that 
	\[
	m(A,C)=\norm{a-c} \leq \norm{a-b} + \norm{b-c} \leq m(A,B)+m(B,C),
	\]
	using triangle inequality in $(\REAL^d,\norm{\cdot})$ and the definition of $m$. Now consider the cases where at least two sets are equal. In the case that $A=C$ the claim follows from the non-negativity property. In the case that $A=B$, we have
	\[
	m(A,C)= 0+ m(A,C) \leq m(A,B)+m(B,C).
	\]
	The case $B=C$ is analogous.
\end{proof}

\begin{proof}[Proof of Lemma \ref{lem:fconvex}]
	Let $x,y\in\REAL^d$, let $\lambda_1\in[0,1], \lambda_2=1-\lambda_1$, and let $p_i^*$ maximize $\norm{\lambda_1 x + \lambda_2 y-p_i}$ over all $p_i \in P_i$. Then \begin{align*}
		f(\lambda_1 x + \lambda_2 y) &= \sum\limits_{i=1}^N \max\limits_{p_i\in P_i}\norm{\lambda_1 x + \lambda_2 y - p_i}
		= \sum\limits_{i=1}^N \norm{\lambda_1 x + \lambda_2 y - \underbrace{(\lambda_1+\lambda_2)}_{=1}p_i^*}\\
		&= \sum\limits_{i=1}^N \norm{\lambda_1 (x-p_i^*) + \lambda_2 (y - p_i^*)}
		\leq \lambda_1 \sum\limits_{i=1}^N \norm{x-p_i^*} + \lambda_2 \sum\limits_{i=1}^N \norm{y - p_i^*}\\
		&\leq \lambda_1 \sum\limits_{i=1}^N \max\limits_{p_i\in P_i}\norm{x-p_i} + \lambda_2 \sum\limits_{i=1}^N \max\limits_{p_i\in P_i}\norm{y - p_i}
		= \lambda_1 f(x) + \lambda_2 f(y)
	\end{align*}
	follows as we have claimed.
\end{proof}

\begin{proof}[Proof of Lemma \ref{lem:nlipschitz}]
	Fix any $x,y\in\REAL^d$. Let $p_i^* \in \operatorname{argmax}\nolimits_{p_i\in P_i}\norm{x-p_i}$. By definition of $f$ and applying the triangle inequality to every single term we have
	\begin{align*}
		|f(x)-f(y)| &= \left|\sum\limits_{i=1}^N m(x,P_i) - \sum\limits_{i=1}^N m(y,P_i)\right| \leq \sum\limits_{i=1}^N \left|m(x,P_i) - m(y,P_i)\right| \\
		&\leq \sum\limits_{i=1}^N m(x,y) = N \norm{x-y}. \qedhere
	\end{align*}
\end{proof}

\begin{proof}[Proof of Lemma \ref{lem:subgrad1}]
	Let $c^*\in\operatorname{argmin}_{c\in\REAL^d} f(c)$. We first prove that for each term $j\in[N]$
	\begin{align}
		\label{eqn:subgrad1}
		\left\langle \frac{ c_i-p_j }{\norm{c_i-p_j} }\cdot \mathds{1}_{c_i\neq p_j}, c_i-c^* \right\rangle \geq m(c_i,P_j) - m(c^*,P_j).
	\end{align}
	Suppose $c_i=p_j$, then $\left\langle 0, c_i-c^* \right\rangle = 0 \geq 0 - m(c^*,P_j) = m(c_i,P_j) - m(c^*,P_j)$. Otherwise let $p_j^*\in P_j$ be a point such that $\norm{c^*-p_j^*}=m(c^*,P_j)$. We have 
	\begin{align*}
		&\phantom{\;\;=\;\;} \frac{\left\langle c_i-p_j , c_i-c^* \right\rangle}{\norm{c_i-p_j}}
		= \frac{\left\langle c_i-p_j , c_i-p_j+p_j-c^* \right\rangle}{\norm{c_i-p_j}} \\
		&\;=\; \frac{\left\langle c_i-p_j , c_i-p_j-(c^*-p_j) \right\rangle}{\norm{c_i-p_j}} 
		= \frac{ \left\langle c_i-p_j , c_i-p_j \right\rangle - \left\langle c_i-p_j , c^*-p_j \right\rangle }{\norm{c_i-p_j}} \\
		&\stackrel{\text{CSI}}{\geq} \frac{\norm{c_i-p_j}^2 - \norm{c_i-p_j}\norm{c^*-p_j} }{\norm{c_i-p_j}} \geq \norm{c_i-p_j} - \norm{c^*-p_j^*} = m(c_i, P_j) - m(c^*, P_j),
	\end{align*}
	which follows by Cauchy-Schwarz inequality (CSI) and maximality of $p_j^*$.
	Now summing Equation (\ref{eqn:subgrad1}) over all $j\in [ N ]$ we have 
	\begin{align*}
		\left\langle \sum\limits_{j=1}^N \frac{c_i-p_j}{\norm{c_i-p_j}}\cdot \mathds{1}_{c_i\neq p_j}, c_i-c^*\right\rangle
		&= \sum\limits_{j=1}^N \left\langle \frac{c_i-p_j}{\norm{c_i-p_j}}\cdot \mathds{1}_{c_i\neq p_j}, c_i-c^*\right\rangle \\
		&\geq \sum\limits_{j=1}^N \left( m(c_i,P_j) - m(c^*,P_j) \right) = f(c_i)-f(c^*). \qedhere
	\end{align*}
\end{proof}

\begin{proof}[Proof of Lemma \ref{lem:expgradient}]
	The vector $\tilde{g}(c_i)$ is normalized by definition except if an index $j$ is chosen such that $c_i = p_j$, in which case $\norm{\tilde{g}(c_i)} = 0$. Thus $\Ex{\norm{\tilde{g}(c_i)}^2}= \Ex{\norm{\tilde{g}(c_i)}}\leq 1$ holds. Also we have
	\begin{align*}
		\Ex{\tilde{g}(c_i)} &= \sum_{j=1}^N \frac{1}{N} \cdot \frac{c_i-p_j}{\norm{c_i-p_j}} = \frac{1}{N} \sum_{j=1}^N \frac{c_i-p_j}{\norm{c_i-p_j}} = \frac{g(c_i)}{N},
	\end{align*}
	cf. Lemma \ref{lem:subgrad1}.
\end{proof}

\begin{proof}[Proof of Theorem \ref{thm:gradboundprob}]
	Assume that we have reached a center $c_i\in\mathbb{R}^d$ while running Algorithm~\ref{alg:subgradient}. Recall from Lemma \ref{lem:expgradient} that $\Exp{\tilde{g}}{\norm{\tilde{g}(c_i)}^2\mid c_i}\leq 1$ and $\Exp{\tilde{g}}{{\tilde{g}(c_i)}\mid c_i}=g(c_i)/N$. We have
	\begin{align*}
		\Exp{\tilde{g}}{\norm{c_{i+1}-c^*}^2 \mid c_i}
		&= \Exp{\tilde{g}}{\norm{c_i- s \tilde{g}(c_i) - c^*}^2 \mid c_i} \\
		&= \Exp{\tilde{g}}{ \norm{c_i- c^*}^2  + \norm{s \tilde{g}(c_i)}^2 -2\left \langle s \tilde{g}(c_i) , c_i-c^* \right\rangle \mid c_i} \\
		&\leq \Exp{\tilde{g}}{\norm{c_i- c^*}^2 \mid c_i} + s^2 -2s\left \langle \Exp{\tilde{g}}{\tilde{g}(c_i) \mid c_i}, c_i-c^* \right\rangle \\
		&= \Exp{\tilde{g}}{\norm{c_i- c^*}^2 \mid c_i} + s^2 -\frac{2s}{N}\left \langle {g(c_i)}, c_i-c^* \right \rangle. 
	\end{align*}
	The law of total expectation implies, by taking expectations over $c_i$ on both sides and rearranging, that
	\begin{align*}
		\Exp{\tilde{g}}{ \norm{c_i- c^*}^2 } + s^2 \geq \Exp{\tilde{g}}{\norm{c_{i+1}-c^*}^2 } + \frac{2s}{N}  \Exp{c_i}{\left\langle g(c_i), c_i-c^* \right\rangle}.
	\end{align*}
	We sum the equations for $i\in \br{0,\ldots, \ell}$ such that the terms $\Exp{\tilde{g}}{\norm{c_{k}-c^*}^2 }$ for $k\in \br{1,\ldots,\ell}$ on both sides cancel and continue this derivation. We have
	\begin{align*}
		\norm{c_0- c^*}^2   + (\ell+1) s^2 &\geq \Exp{\tilde{g}}{\norm{c_{\ell+1}-c^*}^2 } + \frac{2s}{N} \sum\limits_{i=0}^\ell \Exp{c_i}{\left\langle g(c_i), c_i-c^* \right\rangle} \\
		&\geq \frac{2s}{N}  \sum\limits_{i=0}^\ell \Exp{c_i}{\left\langle g(c_i), c_i-c^* \right\rangle}  \stackrel{\text{Subgr.}}{\geq} \frac{2s}{N}  \sum\limits_{i=0}^\ell \Exp{c_i}{ f\left(c_i\right)-f\left(c^*\right) } \\ 
		&\geq \frac{2s}{N}  \left( \ell+1\right) \cdot \min\limits_{i\in\lbrace 0,\ldots ,\ell\rbrace} \Exp{c_i}{ f\left(c_i\right)-f\left(c^*\right) } \\
		&\geq \frac{2s(\ell+1)}{N} \cdot \Exp{{c_i}}{ \min\limits_{i\in\lbrace 0,\ldots ,\ell\rbrace} f\left(c_i\right)-f\left(c^*\right) }
	\end{align*}
	using the subgradient property and the fact that for any set of positive real valued random variables $X_i$ we have $\forall i\colon \min_j(X_j)\leq X_i$. This implies $\forall i\colon\Ex{\min_j(X_j)}\leq \Ex{X_i}$ and thus $\Ex{\min_j(X_j)}\leq \min_j\Ex{X_j}$.
	Rearranging and substituting $R=\norm{c_0 - c^*}$ now implies 
	\begin{align*}
		&\Exp{\tilde{c}}{ f\left(\tilde c\right) -f\left(c^*\right) } = \Exp{{c_i}}{ \min\limits_{i\in \lbrace 0,\ldots,\ell\rbrace} f\left(c_i\right) -f\left(c^*\right) } \leq N\cdot \frac{R^2 + (\ell+1) s^2}{2 (\ell+1) s}. \qedhere
	\end{align*}
\end{proof}

\begin{proof}[Proof of Lemma \ref{lem:startingpoint}]
	Define the random variable $X=m(c^*,P)$. Clearly we have $X\geq 0$. Its expectation equals $\Ex{X}=\sum\nolimits_{i=1}^N \Pr{P=P_i}\cdot m(c^*,P_i)=\sum\nolimits_{i=1}^N m(c^*,P_i)/N=f(c^*)/N$. Thus, by Markov's inequality we have $\Pr{X > f(c^*)/(\delta_1 N)} \leq \delta_1$. Now choose an arbitrary $c_0 \in P$. We have $R=\norm{c_0-c^*}\leq m(c^*,P)\leq f(c^*)/(\delta_1 N)$ with probability at least $1-\delta_1$.
\end{proof}

\begin{proof}[Proof of Lemma \ref{lem:Restimate}]
	Let $\Delta=f(c^*)/N$. Let $c_0$ be the initial center chosen as described in Lemma \ref{lem:startingpoint} with absolute constant $\delta_1 = 1/8 \geq 1/81 > \eps^2$. Consider the two balls $B_1(c_0,\eps \Delta)$ and $B_2(c^*,\left(1/\eps^2\right) \Delta)$. Then $\norm{c_0-c^*} \leq 8\Delta$ holds with constant probability $1-\delta_1$ and clearly $c_0\in B_2$.
	
	Let $\mathcal{Q}$ consist of all sets of $\mathcal{P}$ that are fully contained in $B_2$. We have $|\mathcal{Q}|\geq (1-\eps^2)N$, since otherwise $f(c^*) \geq \sum_{P\in\mathcal{P}\setminus\mathcal{Q}} m(c^*,P) > \left(\eps^2 N\right) \cdot \left( \Delta/\eps^2\right) = f(c^*)$.
	
	Now, sample a collection $\mathcal{S}$ of $1/\eps$ sets, each uniformly from $\mathcal P$. All of our samples are completely contained in the ball $B_2$ with constant probability. By a union bound over the elements of $\mathcal S$, the probability that this fails is at most $\delta_2\leq \eps^2 \cdot 1/\eps = \eps < 1/8$.
	
	Now, let $\tilde R=\sum_{P\in \mathcal{S}} m(c_0,P)$ be our estimate. {Clearly we can compute it in time $\O{dn/\varepsilon}$.} We need to show that $\tilde R$ is close to the average cost $\Delta$ as we have claimed. To this end, consider the following two cases:
	\begin{description}
		\item[At most $(1-2\eps)|\mathcal{Q}|$ sets of $\mathcal{Q}$ are completely contained in $B_1$] In this case we have a constant probability $1-\delta_3$, for $\delta_3<2\eps<2/8$, that there is a set $Q \in \mathcal{Q}\cap \mathcal{S}$ that contains some point $q\in Q$ that lies outside $B_1$. Therefore we have $\tilde{R}\geq m(c_0,Q)\geq\norm{c_0-q}\geq \eps \Delta$. For the upper bound, note that the diameter of $B_2$ is $2\Delta/\eps^2$. Since all our samples are contained in that ball, we have $\tilde R \leq 2 \Delta|\mathcal{S}|/\eps^2=2\Delta/\eps^3$. Thus claim a) holds.
		
		\item[At least $(1-2\eps)|\mathcal{Q}|$ sets of $\mathcal{Q}$ are completely contained in $B_1$] Suppose the second item b) does not hold, i.e. we have $R=\norm{c_0-c^*} > 4 \eps \Delta$. We can bound the number of sets that are not fully contained in $B_1$ by $|\mathcal P \setminus \mathcal Q| + 2\eps |\mathcal{Q}|\leq \eps^2 N + 2\eps N \leq  3\eps N$. Let $\mathcal{B}=\lbrace P\in \mathcal{P}\mid P\subseteq B_1\rbrace$ be the remaining family of sets in $\mathcal P$ that are fully contained in $B_1$. Clearly $|{\mathcal{B}}|\geq (1-3\eps)N$. Now we compare the cost of using the center $c_0$ to the optimal cost. For every $P\in\mathcal P$ we have $|m(c^*,P)-m(c_0,P)|\leq \norm{c_0-c^*}$ by the triangle inequality. {For each $P\in \mathcal{B}$ it holds that}
		$m(c^*,P)-m(c_0,P)\geq (\norm{c_0-c^*}-\eps \Delta) - m(c_0,P) \geq \norm{c_0-c^*}-2\eps \Delta$. 
		So using $\norm{c_0-c^*} > 4 \eps \Delta$ and $\eps<1/9$ we can deduce 
		\begin{align*}
			f(c^*)-f(c_0) &= \sum\nolimits_{P\in {\mathcal{B}}} \left(m(c^*,P)-m(c_0,P)\right) + \sum\nolimits_{P\in \mathcal{P}\setminus {\mathcal{B}}} \left(m(c^*,P)-m(c_0,P)\right) \\
			&\geq |{\mathcal{B}}|(\norm{c_0-c^*}-2\eps \Delta) - (N-|{\mathcal{B}}|) \norm{c_0-c^*} \\ 
			&\geq 2|{\mathcal{B}}|\cdot \left(\norm{c_0-c^*}-\eps \Delta \right) - N\cdot \norm{c_0-c^*} \\ &\geq 
			2N \cdot \left(1-3\eps\right) \cdot \left(\norm{c_0-c^*}-\eps \Delta \right) - N\cdot \norm{c_0-c^*} \\
			&=
			N\cdot \left[ \norm{c_0-c^*}\cdot \left( 1-6\eps\right) - \eps \Delta \cdot \left( 2-6\eps\right) \right] \\ &\geq 
			N\cdot \left[ 4\eps \Delta\cdot \left( 1-6\eps\right) - \eps \Delta \cdot \left( 2-6\eps\right) \right] =
			2\eps \Delta N \left( 1-9\eps\right)>0,
		\end{align*} which contradicts the optimality of $c^*$. Thus, claim b) holds in this case. \qedhere
	\end{description}
\end{proof}

\begin{proof}[Proof of Theorem \ref{thm:thorup}]
	Let $b$ be an arbitrary center in $\mathcal{C}$ with
	\begin{equation}
	\label{eqn:indykpoint}
	\sum_{P\in \mathcal{P}} m(b,P) > (1+\eps) \sum_{P\in \mathcal{P}} m(\hat c,P).
	\end{equation}
	If there is no such center then all centers are good approximations, in which case the theorem is trivial. There are at most $|\mathcal C|$ choices for $b$. We study the random variable 
	\begin{equation}
	\label{eqn:indykvariable}
	X=\sum_{Q\in \mathcal{Q}} \frac{m(b,Q) - m(\hat c,Q)+m(\hat c,b)}{2m(\hat c,b)} = \sum_{Q\in \mathcal{Q}} h(Q),
	\end{equation}
	where $m(\hat c,b)=\norm{\hat c-b}$, and $h(Q)$ denote the summands of Equation~(\ref{eqn:indykvariable}). Since $m$ is a metric, by triangle inequality it holds that $X$ is the sum of random variables between $0$ and $1$. The \emph{bad} event is $X\leq |\mathcal{Q}|/2$. 
	
	If we denote by $\mathds{1}_{Q\in\mathcal{Q}}$ the indicator function that $Q\in\mathcal{Q}$, then we have $X= \sum_{Q\in\mathcal{Q}} h(Q) = \sum_{Q\in \mathcal{P}} h(Q)\cdot \mathds{1}_{Q\in\mathcal{Q}}$, and it holds that
	\begin{align*}
		\Ex{X} = \sum_{Q\in\mathcal{P}} \Ex{h(Q)\cdot  \mathds{1}_{Q\in\mathcal{Q}}} = \sum_{Q\in\mathcal{P}} \left( \mathds{1}_{Q\in\mathcal{Q}} \cdot \frac{|\mathcal{Q}|}{|\mathcal{P}|} \cdot h(Q)\right) = \frac{|\mathcal{Q}|}{|\mathcal{P}|} \sum_{Q\in\mathcal{P}} h(Q).
	\end{align*}
	
	Equation~(\ref{eqn:indykpoint}) and the triangle inequality for any set $Q\in\mathcal{P}$: $m(b,Q)+m(\hat c,Q)\geq m(\hat c, b)$, imply that
	\[
	\left( 2+\eps\right) \sum_{Q\in\mathcal{P}} m(b,Q) > \left( 1+\eps\right) \sum_{Q\in\mathcal{P}} m(\hat c,Q) + \left( 1+\eps\right) \sum_{Q\in\mathcal{P}} m(b,Q)  \geq \left( 1+\eps\right) \sum_{Q\in\mathcal{P}} m(\hat c, b).
	\]
	{Thus it holds that }
	\[
	\sum_{Q\in\mathcal{P}} \left( m(b,Q)-m(\hat c,Q)\right) > \frac{\eps}{1+\eps} \sum_{Q\in\mathcal{P}} m\left(b,Q\right) 
	{ > \frac{\eps}{2+\eps}\sum_{Q\in\mathcal{P}} m(\hat c,b)}.
	\]
	It follows that
	\[
	\Ex{X} = \frac{|\mathcal{Q}|}{|\mathcal{P}|} \sum_{Q\in\mathcal{P}} h(Q) = \frac{|\mathcal{Q}|}{|\mathcal{P}|} \sum_{Q\in\mathcal{P}} \left( \frac{m(b,Q)-m(\hat c,Q)}{2m(\hat c,b)} + \frac{1}{2}\right) 
	> \frac{|\mathcal{Q}|}{2{|\mathcal{P}|}} \sum_{Q\in\mathcal{P}} \left( \frac{\eps}{2+\eps} + 1\right), 
	\]
	and the \emph{bad} event is bounded by
	\begin{equation*}
		\frac{|\mathcal{Q}|}{2} < \Ex{X}\cdot \frac{2+\eps}{2+2\eps} = \Ex{X}\cdot \left( 1-\frac{\eps}{2+2\eps}\right) = \Ex{X}\cdot \left( 1-\gamma \right),
	\end{equation*}
	where $\gamma = \eps/\left( 2+{2}\eps\right)$, and $\gamma\geq \eps/4$ holds since $\eps<1$. Using a Chernoff bound \cite{MotwaniR95} we have that 
	\begin{equation}
	\label{eqn:chernoff}
	\Pr{X \leq \frac{|\mathcal{Q}|}{2}} \leq \Pr{X < (1-\gamma)\cdot \Ex{X}} < e^{-\gamma^2 \Ex{X}/2} \leq e^{-\eps^2 |\mathcal{Q}|/64},
	\end{equation}
	since $\gamma \geq \eps/4$ and $\gamma^2/(1-\gamma) \geq \eps^2 /16$. Transforming $X$ in Equation~(\ref{eqn:chernoff}) we have
	\[
	\Pr{\sum_{Q\in\mathcal{Q}} m(b,Q) \leq \sum_{Q\in\mathcal{Q}} m(\hat c,Q) } < 
	e^{ -\eps^2 |\mathcal{Q}| /64}.
	\]
	Defining the set of \emph{bad centers} $\mathcal B = \lbrace b\in \mathcal{C}\mid \sum_{P\in\mathcal{P}} m(b,P) > (1+\eps)\sum_{P\in\mathcal{P}} m(\hat c,P)\rbrace$, we finally have that 
	\begin{align*}
		&\Pr{\forall b\in \mathcal{C}, \sum_{P\in\mathcal{P}} m(b,P) > (1+\eps)\sum_{P\in\mathcal{P}} m(\hat c,P) \colon \sum_{Q\in\mathcal{Q}} m(b,Q)> \sum_{Q\in \mathcal{Q}} m(\hat{c},Q)} \\
		&= 1- \Pr{\exists b\in \mathcal{C}, \sum_{P\in\mathcal{P}} m(b,P) > (1+\eps)\sum_{P\in\mathcal{P}} m(\hat c,P) \colon \sum_{Q\in\mathcal{Q}} m(b,Q) \leq \sum_{Q\in \mathcal{Q}} m(\hat{c},Q)} \\
		&\geq 1-\sum_{b\in \mathcal B} \Pr{\sum_{Q\in\mathcal{Q}} m(b,Q) \leq \sum_{Q\in \mathcal{Q}} m(\hat{c},Q)} \geq 1-|\mathcal B| \cdot e^{-\varepsilon^2 |\mathcal{Q}|/64} \geq 1-|\mathcal{C}| \cdot e^{-\varepsilon^2 |\mathcal{Q}|/64}.
	\end{align*}
	Then it holds in particular for $a\in \mathcal{C}$ that with probability at least $1-|\mathcal{C}|\exp(-\varepsilon^2 |\mathcal{Q}|/64)$
	\[
	\left( \sum_{P\in\mathcal{P}} m(a,P) > (1+\eps)\sum_{P\in\mathcal{P}} m(\hat c,P)\right)
	\Rightarrow 
	\left( \sum_{Q\in\mathcal{Q}} m(a,Q)> \sum_{Q\in \mathcal{Q}} m(\hat{c},Q)\right).
	\]
	But it holds that $\sum_{Q\in\mathcal{Q}} m(a,Q) \leq  \sum_{Q\in \mathcal{Q}} m(\hat{c},Q)$ by optimality of $a$, so the contrapositive yields that 
	\[
	\Pr{\sum_{P\in\mathcal{P}} m(a,P) \leq (1+\eps)\sum_{P\in\mathcal{P}} m(\hat c,P)} \geq 1-|\mathcal{C}|\cdot e^{-\varepsilon^2 |\mathcal{Q}|/64}. \qedhere
	\]
\end{proof}

\begin{proof}[Proof of Theorem \ref{thm:mainresult}]
	Set $\ell=(68/\eps)^2$. Using Lemma \ref{lem:Restimate}, either our initial center $c_0$ satisfies $\norm{c_0-c^*}\leq 4 \eps f(c^*)/N$. In that case Lemma~\ref{lem:nlipschitz} yields $f(c_0) \leq f(c^*)+N\norm{c_0-c^*}\leq (1+4\eps)f(c^*)$, so the starting point is already a good center.
	
	Otherwise we have $\eps f(c^*)/N \leq \tilde{R} \leq \left( 2/\eps^3\right)\cdot f(c^*)/N$ and $R=\norm{c_0-c^*}\leq 8 f(c^*)/N$ by Lemma~\ref{lem:Restimate}. Thus $\eps^3 \tilde{R}/2 \leq f(c^*)/N \leq \tilde{R}/\eps$. To improve this we run the main loop of Algorithm \ref{def:generalmedian} for the step sizes $s=\tilde{R}_j/\sqrt{\ell+1}$, where $\tilde R_j=2^{j-1}\cdot \eps^3 \tilde{R}$ for all values of $0\leq j \leq \lceil\log(2/\eps^4)\rceil$.  For some particular value of $j$ we have a $2$-approximation given by $f(c^*)/N \leq \tilde{R}_j \leq 2 f(c^*)/N$. In this particular run, setting the step size $s=\tilde{R}_j/\sqrt{\ell+1}$ and plugging this into the bound given in Theorem~\ref{thm:gradboundprob} we have that
	\begin{align*}
		\Exp{\tilde{c}}{ f(\tilde{c}) -f\left(c^*\right) } 
		&\leq N \cdot \frac{R^2 + (\ell+1) s^2}{2(\ell+1)s} \leq N \cdot \frac{R^2 + \tilde{R}_j^2}{2\sqrt{\ell+1}\tilde{R}_j} \leq \frac{8^2 + 2^2}{2\sqrt{\ell+1}} f(c^*) \leq \frac{\eps}{2} f(c^*).
	\end{align*}
	Using Markov's inequality we have that
	\begin{align*}
		\Pr{f(\tilde{c})-f(c^*)\geq 4\eps f(c^*)} \leq \frac{\eps f(c^*)}{8\eps f(c^*)} = \frac{1}{8} = \delta_4.
	\end{align*}
	The best center collected in all repetitions cannot be worse than this particular $\tilde c$ or $c_0$, see the cases of Lemma \ref{lem:Restimate}.
	
	Finally we have a collection $\mathcal{C}$ of $|\mathcal{C}|\in O(1/\eps^2 \cdot \log 1/\eps)$ centers and want to find one of them that is a $(1+\eps)$ approximation for the best center in $\mathcal{C}$ via Theorem~\ref{thm:thorup}. We sample a collection of $64/\eps^2 \cdot \ln(8|\mathcal{C}|)\in O(1/\eps^2 \cdot \log 1/\eps)$ point sets from $\mathcal P$ and find the best center for this subset of points which is the final output of our algorithm. By Theorem~\ref{thm:thorup} this center is within another factor of $(1+\eps)$ to the best in $\mathcal{C}$ with failure probability at most $\delta_5\leq 1/8$. The total approximation factor is thus at most $(1+4\eps)(1+\eps)\leq 1+9\eps$. Rescaling $\eps$ yields the correctness. The total failure probability is at most $\delta=\sum_{i=1}^5 \delta_i \leq 6/8$ by a union bound over all bad events in Lemma~\ref{lem:Restimate} and this theorem.
	
	We continue with the running time. The initial center $c_0$ and the estimate $\tilde R$ can be computed in $O(d{n}/\eps)$ time, see Lemma~\ref{lem:Restimate}. The main loop of Algorithm \ref{alg:subgradient} takes $O(d{n})$ in each iteration and runs for $\ell\in O(1/\eps^2)$ iterations for a fixed step size. But we try $O(\log 1/\eps)$ different step sizes. This makes up a running time of $O(d{n}/\eps^2 \cdot\log 1/\eps)$. Finally we evaluate the objective function for $O(1/\eps^2\log 1/\eps)$ centers for the sample of $O(1/\eps^2 \cdot \log 1/\eps)$ sets taken via Theorem~\ref{thm:thorup}. This can be done in time $O(d{n}/\eps^4 \cdot \log^2 1/\eps)$ which dominates the running time as we have claimed.
\end{proof}

\begin{proof}[Proof of Theorem \ref{lem:datastructurelower}]
	We reduce from the indexing problem which is known to have $\Omega\left( n\right)$ one-way randomized communication complexity \cite{KremerNR99}. Alice is given a vector $a\in \lbrace 0,1\rbrace^n$. Bob has an index $i\in [n]$, and has to guess the value of the $i$-th bit of $a$, denoted $a_i$, with probability $2/3$.
	
	It is known from \cite{AgarwalS15} that there is a centrally symmetric point set $K$ of size $\Omega\left( \exp\left(d^{1/3}\right)\right)$ on the unit hypersphere in $\REAL^d$ centered at the origin, such that for any pair of distinct points $p,q\in K$ it holds that
	\[
	\sqrt{2}\left( 1-2/d^{1/3}\right) \leq \norm{p-q} \leq \sqrt{2}\left( 1+2/d^{1/3}\right)
	\]
	unless $p\neq -q$, in which case $\norm{p-q}=2$.
	
	Let $d$ be the smallest integer such that $d\geq 8$ and $n \leq \exp\left( d^{1/3}\right)$. We choose a set of $\exp\left(d^{1/3}\right)$ pairs of centrally symmetric points of $K$. We may assume that there is a lexicographic order of these pairs, so there is a mapping between the indices of $a$ and the pairs of points of $K$ that is known to both, Alice and Bob. Alice constructs the set $S$ by including the first point of the $i$-th pair, denoted $p_i$, if and only if $a_i=1$. She builds a data structure $\Sigma_S$ which she sends to Bob.
	
	Let the data structure be such that for any $x\in \REAL^d$ the answer to a query $\Sigma_S (x)$ satisfies
	\begin{align*}
		\frac{m(x,S)}{\alpha} \leq \Sigma_S (x) \leq m(x,S),
	\end{align*}
	for some constant $1<\alpha<\sqrt{2} \left( 1-2/d^{1/3}\right)$.
	We consider two cases:
	\begin{itemize}
		\item If $a_i=1$ then $p_i$ is included in $S$. Thus $\Sigma_S \left(-p_i\right) \geq m\left(-p_i, S\right) / \alpha = 2/\alpha$. Since $\alpha < \sqrt{2} \left( 1-2/d^{1/3}\right)$ it holds that $\Sigma_S \left(-p_i\right) >  \sqrt{2} \left( 1+2/d^{1/3}\right)$.
		\item If $a_i=0 $ then $p_i \notin S$ and thus $\Sigma_S \left(-p_i\right) \leq m\left( -p_i, S\right) \leq \sqrt{2}\left( 1+2/d^{1/3}\right).$
	\end{itemize}
	
	\noindent
	Thus if $\alpha < \sqrt{2} \left( 1-2/d^{1/3}\right)$ Bob could based on $\Sigma_S$ solve the indexing problem by querying $q_i = -p_i$. Consequently any encoding of $\Sigma_S$ uses $\Omega\left(\min\lbrace n, \exp \left(d^{1/3}\right)\rbrace \right)$ bits of space.
\end{proof}

\begin{proof}[Proof of Theorem \ref{thm:SEBalgorithm}]
	The correctness of the algorithm follows from \cite{MunteanuSF14} and Theorem \ref{thm:mainresult}. It remains to analyze the running time.
	In the first case we go through all input distributions and use $k$ independent copies of a weighted reservoir sampler \cite{Chao82,Efraimidis15} to get the $k$ samples. This takes $O(dnzs)\subseteq O(dnz/\eps^2 \cdot \log 1/\eps)$ time. The subsampled problem is then solved via Theorem~\ref{thm:mainresult} with ${n}=1$ in time $O(d/\eps^4 \cdot \log^2 1/\eps)$ with failure probability at most $\delta=\sum_{i=1}^5 \delta_i \leq 6/8$.
	
	In the second case, each realization can be sampled similarly in time $O(dnz)$ but the probability that a realization is non-empty can only be lower bounded by $\eps$. However this means that the expected number of samples that we need to take in order to have $k$ non-empty realizations is bounded by at most $k/\eps$. Thus, by an application of Markov's inequality the probability that we need more than $8k/\eps\in O(1/\eps^3 \cdot \log 1/\eps)$ trials is bounded by at most $\delta_6 = 1/8$. So we can assume with constant probability that this step succeeds in time $O(dnz/\eps^3 \cdot \log 1/\eps)$. The subsampled problem is then solved via Theorem~\ref{thm:mainresult} 
	in time $O(dn/\eps^4 \cdot \log^2 1/\eps)$. The final failure probability is at most $\delta=\sum_{i=1}^6 \delta_i \leq 7/8$.\qedhere 
\end{proof}

\begin{proof}[Proof of Theorem \ref{thm:algsvdd}]
	With the described adaptations the correctness of the algorithm follows from Theorem~\ref{thm:SEBalgorithm}.
	
	The running time increases by a factor that is imposed by the simulation of the distance computations within Algorithm~\ref{alg:subgradient}. Note that by the invariant there are at most $i+1$ non-zero coefficients in the $i$-th step. Expression (\ref{eqn:dist_implicit}) can thus be evaluated in time $O(i^2 d)$ assuming $K$ can be evaluated in time $O(d)$. We conclude:
	\begin{itemize}
		\item The sampling part of Algorithm~\ref{alg:SEB} does not change and thus runs in time $\O{dnz/\eps^3 \log 1/\eps}$.
		
		\item Estimating $\tilde R=\sum_{P\in \mathcal{S}} m(c_0,P)$ takes time $O(dn/\eps)$ since $i=0$, $|\mathcal S|=O(1/\eps)$, and for each $P$ we have $|P|\leq n$.
		
		\item The subgradient computation in the $i$-th iteration of the main loop takes $O(d n i^2)$ time since it needs to maximize over $n$ distances. This means that for $\ell\in \O{1/\eps^2}$ iterations we need $\O{dn\sum_{i=1}^\ell i^2} = \O{dn\ell^3} = \O{dn/\eps^6}$ time. This is repeated $\O{\log 1/\eps}$ times, which implies a running time of $\O{dn/\eps^6 \cdot \log 1/\eps}$.
		
		\item The evaluation of the minimum at the end reaches as before over $\O{1/\eps^2 \cdot \log 1/\eps}$ centers, each defined by $O(\ell)$ non-zero coefficients. Each is evaluated with respect to a sample of $\O{1/\eps^2 \cdot \log 1/\eps}$ sets from the input. Since each maximum distance evaluation takes $\O{dn \ell^2} = \O{dn/\eps^4}$ time, we have that the total time for evaluation is $\O{dn/\eps^8 \cdot \log^2 1/\eps}$.
	\end{itemize}
	Summing the running times yields the claim.
\end{proof}

\end{document}